\def\C{\mathbb C}
\def\R{\mathbb R}
\def\Z{\mathbb Z}
\def\H{\mathcal H}
\newtheorem{thm}{Theorem}[section]
\newtheorem{lem}[thm]{Lemma}
\newtheorem{example}{Example}
\begin{document}

\title[Centralizers of regular subgroups in simple Lie groups]{Centralizers of maximal regular subgroups in simple Lie groups and relative congruence classes of representations}

\author{M. Larouche}
\address{D\'epartement de Math\'ematiques et Statistique, 
         Universit\'e de Montr\'eal,
         2920 chemin de la Tour,
         Montr\'eal, Qu\'ebec, H3T 1J4, Canada}
\email{larouche@dms.umontreal.ca}

\author{F. W. Lemire}
\address{Department of Mathematics,
         University of Windsor,
         Windsor, Ontario, Canada}
\email{lemire@uwindsor.ca}

\author{J.  Patera}
\address{Centre de Recherches Math\'ematiques,
         Universit\'e de Montr\'eal,
         C.P.6128-Centre ville,
         Montr\'eal, Qu\'ebec, H3C\,3J7, Canada}
\email{patera@crm.umontreal.ca}

\date{\today}
\begin{abstract}

In the paper we present a new, uniform and comprehensive description of centralizers of the maximal regular subgroups in compact simple Lie groups of all types and ranks. The centralizer is either a direct product of finite cyclic groups, a continuous group of rank 1, or a product, not necessarily direct, of a continuous group of rank 1 with a finite cyclic group. Explicit formulas for the action of such centralizers on irreducible representations of the simple Lie algebras are given.

\end{abstract}
\maketitle
\section{Introduction}

Let $G$ be a connected simple Lie group with corresponding Lie algebra denoted by $L$. Let $L'$ be a maximal regular semisimple Lie subalgebra of $L$ with corresponding subgroup $G'$. The goal of this paper is to study the centralizer of $G'$ in $G$ and its action on the representations of the Lie algebra $L$.  In general these centralizers are abelian subgroups of $G$. The first complete description of the continuous centralizers, whenever they exist, was given by Borel and de Siebenthal \cite{BdeS}, while the cases of discrete centralizers were first described by Dynkin and Oni\v s\v cik \cite{DO}.

In this paper we reformulate the results of \cite{BdeS} and \cite{DO} in a more accessible manner, using tools which were not available to the original authors. The existence and structure of the centralizer is made immediately visible from a decoration of the extended Dynkin-Coxeter diagram. In addition we provide explicit formulas for the actions of these centralizers on the finite-dimensional irreducible representations of $L$ and apply this information to the branching rules of $L$ with respect to $L'$. We observe in particular that the centralizer of $G'$ in $G$ is either a direct product of finite cyclic groups (in the maximal regular semisimple  case), a continuous group of rank 1 or a product, not necessarily direct, of a continuous group of rank 1 with a finite cyclic group (in the maximal regular reductive case).

The eigenvalues of these operators serve to decompose the irreducible representations of $L$ into representations of $L'$.  Projection matrices provided in \cite{MPR,LNP,LaP} transform the weights of an irreducible representation of $L$ into weights of the representations of the subalgebra. We can include, as an additional label, the eigenvalue of the action of the centralizer vector that serves to decompose the irreducible representation of $L$ into a direct sum of representations of $L'$. Note that the representation of $L'$ corresponding to a fixed eigenvalue may not be irreducible.

In physics the importance of the centralizers has been recognized for a long time. One of the best known examples occurs in the case $SU(3)\supset SU(2)\times U_1$. Here the centralizer is a continuous  $1$-parametric subgroup denoted $U_1$. The existence, structure and application of the centralizers in specific representations is not as well known. As two of the lowest examples one can point out the cyclic groups $\Z_2$ and $\Z_3$ in $Sp(2)\supset SU(2)\times SU(2)\times \Z_2$ and $G_2\supset SU(3)\times \Z_3$ respectively. One of the consequences of the presence of a centralizer $\Z_n$ is that it splits irreducible representations of the subalgebra/subgroup into $n$ equivalence classes. Undoubtedly such classes would find a physical interpretation in some cases. We call them {\it relative congruence classes} in this paper.

Discrete centralizers of maximal regular semisimple subalgebras are found in all simple Lie algebras except $A_n$\ $(1\leq n<\infty)$. In all cases they are formed as a product of up to three cyclic groups. Continuous centralizers of maximal regular reductive subalgebras appear in all simple Lie algebras, except in $G_2$, $F_4$, and $E_8$.

Note that we use Dynkin notations and numberings for roots, weights and diagrams.

\section{The center of $G$}

We start by reviewing the well-known results concerning the center of the simple Lie groups. We use the standard notation to identify the simple Lie groups $G$ and their corresponding simple Lie algebras, namely there are four infinite classes denoted $A_n$ $(n\geq 1)$, $B_n$ $(n\geq 2)$, $C_n$ $(n\geq 2)$ and $D_n$ $(n\geq 4)$ as well as five exceptional groups/algebras denoted by $E_6$, $E_7$, $E_8$, $F_4$ and $G_2$. The structure and properties of these Lie groups and their corresponding Lie algebras is encoded in their so-called decorated extended Dynkin diagrams (see Figure~\ref{fig:dynkin+marks}). The node in these diagrams labelled by $0$ denotes $\alpha_0$, the negative of the highest root of the algebra. The remaining nodes represent the simple roots $\{\alpha_1,\dots,\alpha_n\}$ of the algebra. The mark $m_k$ on the simple root $\alpha_k$ for $k=1,\dots,n$ denotes the coefficient of $\alpha_k$ in the expansion of the highest root $-\alpha_0$ in terms of the simple roots $\alpha_i$ (see Figure~\ref{fig:dynkin+marks}). The mark on $\alpha_0$, by convention, is 1. Note that the algebras $B_2$ and $C_2$ are homomorphic, and the extended Dynkin diagram of $B_2$ is the same as the one of $C_2$, with the only difference being in the numbering of the nodes : the roots $\alpha_1$ and $\alpha_2$ are interchanged.

From \cite{Kac} we know that the conjugacy classes of elements of finite order in $G$ of rank $n$ are specified in a bijective fashion by the set of all  ($n{+}1$)-tuples of relatively prime non-negative integers. To each such ($n{+}1$)-tuple $[s_0,s_1,\dots,s_n]$ with $s_i\in \Z_{\geq 0}$ we associate the point $X$ in the fundamental region of $G$ given by $$X={\frac{s_1}{M}}\omega_1+\cdots+{\frac{s_n}{M}}\omega_n$$ where $M=s_0+\sum_{i=1}^nm_is_i$ and the $\omega_i's$ denote the fundamental weights of the algebra. The order of the element of $G$ corresponding to such an $X$ is $M$.

The elements of the center $Z(G)$ of the simple Lie group $G$ are in one-to-one correspondence with the nodes of the corresponding extended diagram that carry marks equal to 1. They are in fact associated with the corners of the fundamental region of $G$. The extension node, which always has its mark equal to 1,  refers to the identity element of $G$. Explicitly, if $\{\hat\omega_i|i=1,\dots,n\}$ denotes the basis of the Cartan subalgebra $\H$ of $L$ which is dual to the base of simple roots $\{\alpha_i|i=1,\dots,n\}$ of $\H^*$ in the sense that $\alpha_i(\hat\omega_j)=\delta_{i,j}$, then the elements of the center of $G$ consist of all elements $e^{2\pi i \hat\omega_k}$ where $\alpha_k$ has mark $m_k=1$. In Table~\ref{tab:center}, for each simple Lie group admitting a non-trivial center, we list for reference the group structure as well as a generator of the center. 

For any irreducible representation of the group $G$ the central elements act as multiples of the identity. The collection of all finite-dimensional irreducible representations can then be partitioned according to the action of the central elements. Each equivalence class of irreducible representations with respect to this equivalence is called a {\it congruence class}.  The concept of congruence classes has application in the decomposition of representations such as tensor products of irreducible representations, see for example \cite{LP}.

Let us consider an irreducible finite-dimensional representation of $G$ having highest weight $\lambda=\sum_{i=1}^n m_i\omega_i$. Let $z=e^{2\pi i \hat\omega_j}$ be a non trivial element of the center $Z(G)$. Then the eigenvalue of $z$ acting on this representation is given by $e^{2\pi i \lambda (\hat{\omega}_j)}$. 

If we write 
$$
\hat\omega_j =\frac1C \sum_{i=1}^n r_i\hat\alpha_i
$$
where $\{\hat\alpha_1,\dots,\hat\alpha_n\}$ is the
basis of $\H$ dual to the basis of fundamental weights $\{\omega_1,\dots,\omega_n\}$  i.e.
$\omega_i(\hat\alpha_j)=\delta_{i,j}$\,, and where $C$ is the determinant of the Cartan matrix of $G$, we have that 
$$
\lambda (\hat{\omega_j})=\frac1C\sum_{i=1}^n r_i m_i \,.
$$

Since the eigenvalue of the central element $z$ is $e^{2\pi i \lambda(\hat{\omega_j})}$, we are really interested in the value of $\lambda (\hat{\omega_j}) \mod \Z$, which is uniquely determined by $\zeta_z := \sum_{i=1}^n r_i m_i \mod C$. The values $\zeta_z$ are listed in Table~\ref{tab:center} for each non-trivial central element of $G$. By convention, we list the value $\zeta_z$, where $z=e^{2\pi i \hat\omega_j}$, next to the $j^{th}$ node in the extended Dynkin diagram of $G$. We write 1 next to the extension node since it represents the identity of $G$.

\section{Branching rules and projection matrices}

Reduction of weight systems of irreducible finite-dimensional representations of simple Lie algebras to weight systems of representations of their maximal reductive subalgebras has been addressed several times in the literature \cite{MPR,PSan,MPS,McP}. In physics that problem is often referred to as the \textit{computation of branching rules}.

The branching rule for $L\supset L'$, where $L'$ is a maximal reductive subalgebra of $L$, is a linear transformation between Euclidean spaces $\R^n\rightarrow\R^m,$ where $n$ and $m$ are the ranks of $L$ and $L'$ respectively. This linear transformation can be expressed in the form of an $m\times n$ matrix, the \textit{projection matrix}. A suitable choice of bases allows us to obtain integer matrix elements in all the projection matrices we use here. The main advantage of the projection matrix method is the uniformity of its application as to the different algebra-subalgebra pairs, which makes it particularly amenable to computer implementation. 

The projection matrix method, used in \cite{MPR,PSan,MPS,McP}, can also be extended to compute the branching rules of orbits of Weyl groups $W(L)$ of semisimple Lie algebras $L$. An orbit of $W(L)$ is a finite set of points of $\R_n$ obtained from the action of $W(L)$ on a single point of $\R_n$. Weyl group orbits are closely related to weight systems of finite-dimensional irreducible representations of semisimple Lie algebras. Indeed a weight system consists of many orbits of the corresponding Weyl group, a specific orbit often appearing more than once. Which orbits a particular representation is comprised of is well known, and extensive tables of multiplicities of dominant weights can be found in \cite{BMP}. Considering the reduction of individual orbits rather than of entire weight systems offers some advantages, one of which is computational : while the number weights of a weight system grows without limits with the dimension of the representation, the number of points of an individual orbit is at most the order of the corresponding Weyl group. When dealing with large-scale computation for representations, one often needs to break down the problem into smaller ones for individual orbits. Orbit-orbit branching rules are computed with the projection matrix method for orbits of $W(A_n)$ in \cite{LNP} and for orbits of $W(B_n)$, $W(C_n)$ and $W(D_n)$ in \cite{LaP}.

The projection matrix $P$ for a particular pair $L\supset L'$ is calculated from one known branching rule. The classification of maximal reductive subalgebras of simple Lie algebras \cite{BdeS,Dynkin} provides the information to find that branching rule. The projection matrix is then obtained using the weight systems of the representations, by requiring that weights of $L$ be transformed by $P$ to weights of $L'$. Since any ordering of the weights is admissible, the projection matrix is not unique. However, by ordering the weights of $L$ by levels in a cigar-shaped structure and by doing the same with the weights of $L'$, the projection matrix obtained is convenient for large-scale computation, because dominant weights of $L'$ will always be in the first half of the weights found by multiplying the weights of $L$. Hence the problem is already reduced by half.

The projection matrices we will use in this paper are the ones provided by \cite{LNP} for reductions involving the Lie algebra $A_n$, by \cite{LaP} for reductions involving the Lie algebras $B_n$, $C_n$ and $D_n$, and by \cite{MPR} for the ones involving the exceptional Lie algebras.

\section{Discrete centralizers}

Let $G$ be a connected simple Lie group with its corresponding Lie algebra of rank $n$ denoted by $L$. Any maximal regular semisimple subalgebra of $L$ having rank $n$ can be realized in terms of the extended Dynkin diagram of $L$. In fact any such subalgebra $L'$, with corresponding subgroup $G'$, corresponds to the Dynkin diagram resulting from deleting one node having prime mark from the extended Dynkin diagram of $L$. Clearly such maximal regular semisimple subalgebras occur for all simple Lie algebras except $A_n\ (1\leq n<\infty)$. Since $G'$ is a maximal regular semisimple subgroup of $G$ the centralizer $C_G(G')$ of $G'$ in $G$ consists of all elements $e^{2\pi i h}$ where $h\in \H$ has the property that for all roots $\beta$ of the subalgebra $L'$ we have $e^{2\pi i\beta(h)}=1$ or equivalently $\beta(h)\in \Z$. It follows that the centralizer is a discrete abelian subgroup of the group $G$. In fact, the centralizer contains the center $Z(G)$ of the group $G$, the center $Z(G')$ of the group $G'$ as well as the group generated by the element $e^{\frac{2\pi i}{m_k}\hat\omega_k}$, where $\alpha_k$ denotes the deleted node with prime mark $m_k$ and $\hat\omega_k$ is the element of the Cartan subalgebra $\H$ of $L$ such that $\alpha_i(\hat\omega_k)=\delta_{i,k}$ for all $i=1,\dots, n$. This situation could be complicated by the fact that these three discrete groups have a non trivial intersection, but the following lemma simplifies it.

\begin{lem}
Let $G$ be a connected simple Lie group, with corresponding Lie algebra $L$, and $G'$ be a maximal regular semisimple Lie subgroup of $G$ with corresponding subalgebra $L'$. Let $\alpha_k$ denote the deleted node from the extended Dynkin diagram of $G$, having prime mark $m_k$. Then \\
\indent (i) $C_G(G')/Z(G) \cong \Z_{m_k}$ and \\
\indent (ii) $C_G(G')=Z(G')$.
\end{lem}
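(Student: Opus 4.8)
The plan is to compute the centralizer $C_G(G')$ explicitly using the description already available from the extended Dynkin diagram, and to identify the quotient by $Z(G)$ with $\Z_{m_k}$ directly. First I would fix coordinates: write a general element of $\H$ as $h=\sum_{i=1}^n t_i\hat\omega_i$, so that $\alpha_i(h)=t_i$, and recall that $e^{2\pi i h}$ centralizes $G'$ precisely when $\beta(h)\in\Z$ for every root $\beta$ of $L'$. Since $L'$ is obtained by deleting the node $\alpha_k$ from the extended diagram, the simple roots of $L'$ are $\{\alpha_i : i\neq k,\ 0\leq i\leq n\}$, where $\alpha_0=-\sum_{j=1}^n m_j\alpha_j$ is the lowest root. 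The conditions $\alpha_i(h)\in\Z$ for $i\neq k$, $i\geq 1$, force $t_i\in\Z$ for all $i\neq k$; and the condition $\alpha_0(h)\in\Z$ then reads $-\sum_{j} m_j t_j\in\Z$, i.e. $m_k t_k\in\Z$. Thus $t_k\in\frac{1}{m_k}\Z$, and modulo the lattice of $h$ with $e^{2\pi i h}$ trivial (the coweight lattice $\sum_i \Z\hat\omega_i$), the centralizer is generated by $e^{\frac{2\pi i}{m_k}\hat\omega_k}$ together with the center $Z(G)=\{e^{2\pi i\hat\omega_j}: m_j=1\}$ — where the subtlety is exactly that the $t_i$ ($i\neq k$) being allowed to be arbitrary integers is what brings in $Z(G)$.

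For part (i), I would argue that the map $h\mapsto t_k\bmod\Z$, i.e. $C_G(G')\to\frac{1}{m_k}\Z/\Z\cong\Z_{m_k}$, is a surjective homomorphism (surjectivity because $e^{\frac{2\pi i}{m_k}\hat\omega_k}$ is in the centralizer), whose kernel consists of those $e^{2\pi i h}\in C_G(G')$ with $t_k\in\Z$, hence with all $t_i\in\Z$; such elements are products of $e^{2\pi i\hat\omega_i}$ with $i$ ranging over nodes of mark one (the constraint $-\sum m_j t_j\in\Z$ being automatic), which is precisely $Z(G)$. This gives $C_G(G')/Z(G)\cong\Z_{m_k}$. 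The one point needing care is that the kernel is $Z(G)$ on the nose and not something larger — this follows because $e^{2\pi i h}=1$ in $G$ iff $h$ lies in the coweight lattice, so distinct subsets of mark-one nodes can a priori give the same central element, but in any case the image set is exactly $Z(G)$, which suffices.

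For part (ii), the strategy is a rank/cardinality count: $Z(G')$ is contained in $C_G(G')$ (an element of $\H$ killing all roots of $L'$ modulo $\Z$ certainly includes those inner automorphisms of $G'$ acting trivially), so it suffices to show $|Z(G')|=|C_G(G')|=m_k\,|Z(G)|$. Here I would invoke that the center of the semisimple group $G'$ has order equal to the index of the root lattice in the weight lattice of $L'$, equivalently the determinant of its Cartan matrix, and that deleting a node of prime mark $m_k$ from the extended diagram of $L$ relates $\det(\text{Cartan of } L')$ to $m_k^2\det(\text{Cartan of } L)=m_k^2 C$ — combined with $|Z(G)|=$ (number of mark-one nodes) $=$ an appropriate divisor related to $C$. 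In practice the cleanest route is: $C_G(G')$ is generated, modulo the coweight lattice, by $\hat\omega_j$ ($m_j=1$) and $\frac{1}{m_k}\hat\omega_k$, and one checks these generate exactly the coroot lattice of $L'$ modulo the coweight lattice of $L$; identifying the coroot/coweight data of $L'$ inside that of $L$ then yields $C_G(G')=Z(G')$ directly.

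The main obstacle I anticipate is part (ii): establishing $C_G(G')=Z(G')$ rather than merely $Z(G')\subseteq C_G(G')$ requires pinning down the lattice $\{h\in\H : \beta(h)\in\Z\ \forall\beta\in\Delta(L')\}$ modulo the coweight lattice of $L$ and recognizing it as $Z(G')$; because $L'$ sits inside $L$ as a regular (but not necessarily simply-laced-preserving) subalgebra, one must be careful about which coweight/coroot lattices are being compared, and the prime-mark hypothesis on $m_k$ is what makes the index computation come out to a single cyclic factor. I would handle this by working case-by-case only if the uniform lattice argument stalls, but I expect the determinant-of-Cartan-matrix bookkeeping to close the gap cleanly.
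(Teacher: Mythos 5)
Your part (i) is correct and is essentially the paper's argument made explicit: where the paper simply asserts that $C_G(G')=\langle Z(G),\,e^{\frac{2\pi i}{m_k}\hat\omega_k}\rangle$, you derive it from the conditions $\alpha_i(h)\in\Z$ for $i\neq k$ and $\alpha_0(h)\in\Z$, and your map $h\mapsto t_k \bmod \Z$ cleanly exhibits the quotient. (One slip worth fixing: the kernel of $h\mapsto e^{2\pi ih}$ is the coroot lattice of $L$, not the coweight lattice $\sum_i\Z\hat\omega_i$ --- if it were the latter, $Z(G)$ would be trivial and every $e^{2\pi i\hat\omega_j}$ would be the identity; your identification of the kernel of the $t_k$-map with $Z(G)$ survives this, since the set $\{e^{2\pi ih}: \text{all } t_i\in\Z\}$ is indeed $Z(G)$.)

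Part (ii) has a genuine gap: the cardinality count rests on two false identities. First, $\det(\mathrm{Cartan}(L'))=m_k^2\det(\mathrm{Cartan}(L))$ fails whenever $L$ is not simply laced --- for $G_2\supset A_2$ it would give $9$ instead of $3$, and for $B_n\supset D_n$ it gives $8$ instead of $4$; the correct relation is $\det(\mathrm{Cartan}(L'))=m_k\check m_k\det(\mathrm{Cartan}(L))$, where $\check m_k$ is the comark. Second, and more seriously, $|Z(G')|$ is not the determinant of the Cartan matrix of $L'$, because the subgroup $G'$ of $G$ is in general not simply connected: for $G_2\supset A_1\oplus A_1$ the subgroup has center of order $2$ (as the paper's Table confirms, $C_G(G')\cong\Z_2$), while $\det(\mathrm{Cartan}(A_1\oplus A_1))=4$. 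Even in the simply laced case, where your determinant formula is right, the count does not close: you would get $m_k^2C$ against the target $m_kC=|C_G(G')|$, the discrepancy $m_k$ being exactly $\pi_1$ of the embedded $G'$. Your fallback ``cleanest route'' also has the lattices backwards ($Z(G')$ is the coweight lattice of $L'$ modulo the coroot lattice of $L$, not the other way around) and is left unverified. For comparison, the paper's route to (ii) is quite different and is where the primality of $m_k$ is actually used: since $C_G(G')/Z(G)\cong\Z_{m_k}$ is a simple group, it suffices to exhibit a single element $x\in Z(G')\setminus Z(G)$, and such an $x$ is read off from a new mark-one node appearing in the Dynkin diagram of $L'$ (with $D_4\supset A_1\oplus A_1\oplus A_1\oplus A_1$ handled separately via the $\alpha_0$ node). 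Note also that $C_G(G')=Z(G')$ follows in one line from $C_G(G')\subseteq C_G(T)=T\subseteq G'$, where $T$ is the maximal torus contained in the full-rank subgroup $G'$; your setup nearly contains this observation, and it would replace the entire counting argument.
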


\begin{proof}
\textit{(i)} From the construction of $G'$ from the Dynkin diagram of $G$, it is clear that 
$$
C_G(G')=\left\langle Z(G), Z(G'), e^{\frac{2\pi i}{m_k}\hat\omega_k} \right\rangle  
$$
and that 
$$
Z(G') \subseteq \left< Z(G), e^{\frac{2\pi i}{m_k}\hat\omega_k} \right> \,.
$$
By combining those two observations, we thus find
$$
C_G(G')=\left< Z(G), e^{\frac{2\pi i}{m_k}\hat\omega_k} \right> 
$$ 
and we have the first result
$$
C_G(G')/Z(G) \cong \Z_{m_k} \,.
$$

\textit{(ii)} To prove the second result, it suffices to find an element in the center of $G'$ which is not in the center of $G$. Suppose such an element $x$ exists : 
$$
\exists x \in Z(G')\backslash Z(G) \Rightarrow \left< x+Z(G) \right> \cong \Z_{m_k} 
$$
since $\Z_{m_k}$ is a simple group and has no proper subgroup. But from \textit{(i)} we have that
$$
\Z_{m_k} \cong C_G(G')/Z(G)
$$
and thus find
$$
Z(G') = C_G(G')\,.
$$

Now, it remains to demonstrate that such an element $x$ exists in all cases. It is easy to see that in all cases except for the $D_4 \supset A_1\oplus A_1\oplus A_1\oplus A_1$ case, whenever a node having prime mark is deleted from the extended diagram of $G$, a new node in the diagram of $G'$ has its mark equal to 1. Hence a new element is added to the center of $G'$.
In the $D_4 \supset A_1\oplus A_1\oplus A_1\oplus A_1$ case, all the remaining nodes after the deletion already had their marks equal to 1 in the diagram of $G=D_4$. However, the extended node, the $\alpha_0$ node, which corresponded to the identity in $G=D_4$, becomes a non trivial element in $G'=A_1\oplus A_1\oplus A_1\oplus A_1$ and so is the element $x$ we were looking for.
\end{proof}

\subsection{Discrete centralizers and representations}

Let $\phi(G)$ be an irreducible finite-dimensional representation of $G$, acting as a set of linear transformations in $V_\phi$. An element $z=e^{ix}$ of the centralizer $C_G(G')$ of $G'$ in $G$ acts on any finite-dimensional irreducible representation $\phi_\lambda(G')$, arising during the restriction of a representation $\phi(G)$ to the subgroup:
$$
\phi(G)=\bigoplus_\lambda\phi_\lambda(G') \quad\Longleftrightarrow\quad\phi(L)=\bigoplus_\lambda\phi_\lambda(L')
$$
as a multiple of the identity matrix:
$$
\phi(z)\phi(G)\phi(z^{-1}) =\phi(z)\left(\bigoplus_\lambda\phi_\lambda(G')\right)\phi(z^{-1})
=\bigoplus_\lambda\kappa_\lambda\phi_\lambda(G')\,.
$$
If $z^N=1$, the eigenvalues $\kappa_\lambda$ are $N$-th roots of $1$.

A discrete centralizer $C_G(G')$ is a product of cyclic groups. Hence it consists of the elements of $G$ which are of finite order. For our task it suffices to describe just one element which generates each cyclic subgroup in the centralizer. More precisely, we need to determine the eigenvalues $\kappa_\lambda$ of such elements on every $\phi_\lambda(G')$.

We are interested, within each $G$-conjugacy class, by its unique element represented by a diagonal matrix in every $V_\phi$. A general method of describing diagonal representatives of conjugacy classes of elements of finite order in $G$ is found in \cite{MP}. Here we use it just for the elements of the centralizers.

Suppose $\phi(z)$ is the diagonal matrix representing the element $z=e^{ix}\in G$ in $V_\phi$. Suppose further that $V_\phi$ is decomposed into the sum of its weight subspaces:
$$
V_\phi=\sum_\mu V_{\phi}(\mu)\,.
$$
Then for any vector $v\in V_\phi(\mu)$ we have 
$$\phi(z)v=\kappa_\mu v$$ 
where
$$\kappa_\mu =e^{i\mu(x)} \,.$$

For example, if we take $z$ to be the element $e^{\frac{2\pi i}{m_k}\hat\omega_k} \in C_G(G')$, then the eigenvalue $\kappa_\mu$ can be calculated in the same fashion as in Section 2. If $\mu=\sum_{i=1}^n m_i\omega_i$, by writing $\hat\omega_k$ in terms of the $\hat\alpha_i$'s, we get :
$$
\hat\omega_k =\frac1C \sum_{i=1}^n r_i\hat\alpha_i
$$
where $C$ is the determinant of the Cartan matrix of $G$, and therefore have that 
$$
\frac{\mu (\hat\omega_k)}{m_k}=\frac{1}{m_kC}\sum_{i=1}^n r_i m_i \,.
$$
Again, we are really interested in the value of $\frac{\mu (\hat\omega_k)}{m_k} \mod \Z$, which can be given by a congruence equation of the form $\sum_{i=1}^n r_i m_i \mod m_kC$. 

\subsection{Relative congruence classes and branching rules}

The decomposition of an irreducible representation of $L$ into a sum of irreducible representations of $L'$ is known as the branching rule for the pair $L\supset L'$. In general, if we start with a finite-dimensional representation of $G$ then the elements in the centralizer of $G'$ can be used to provide partial invariants for the summands in the branching rule. The sets of weights of $L$ on which the centralizer elements take on constant values are called \textit{relative congruence classes}.

Example 2 below illustrates the use of relative congruence classes in branching rules.

\subsection{Explanation of Tables~\ref{tab:discrete} and~\ref{tab:discrete.exc}}

Tables~\ref{tab:discrete} and~\ref{tab:discrete.exc} present the structure of the centralizers and the relative congruence classes for all maximal regular semisimple subalgebras in classical and exceptional simple Lie algebras, respectively. For each such algebra-subalgebra pair $L\supset L'$, with associated groups $G\supset G'$, we give the structure of the centralizer of $G'$ in $G$, $C_G(G')$, which is always a product of cyclic groups. 

Since $C_G(G')=Z(G')$, we give the generators of the centralizer by computing the generator of the center of each simple part of the subgroup $G'$. The embedding we choose for our task is the one provided by the corresponding projection matrix, which can be found in \cite{LaP} for the classical cases and in \cite{MPR} for the exceptional ones. As it was discussed in Section 3, this particular choice offers computation efficiency. Assume that $\{\alpha_1,\dots,\alpha_n\}$ are the simple roots of the simple Lie algebra $L$ and let $-\alpha_0$ denote the highest root. A maximal regular semisimple subalgebra $L'$ of $L$ can be realized as the subalgebra with simple roots $\{\alpha_0, \alpha_1,\dots, \alpha_n\}\setminus\{\alpha_k\}$ where $\alpha_k$ is a simple root of $L$ with prime mark $m_k$. We saw in Lemma 4.1 that the centralizer of $G'$ in $G$ is generated by the center of the Lie group $G$ together with the element
$$e^{\frac{2\pi i}{m_k}\hat\omega_k}\,.$$
In order to relate this information in the context of projection matrices we note that there exists a Weyl automorphism $\sigma$ of $L$ which transforms the subalgebra $L'$ to the corresponding subalgebra $L''$ used to produce the projection matrix for the branching rule for $L\supset L''$. If $G''$ is the Lie group associated with $L''$, it follows that the centralizer of $G''$ in $G$ is generated by the center of $G$ together with 
$$e^{\frac{2\pi i}{m_k}\sigma(\hat\omega_k)}\,.$$

As explained in Subsection 4.1, the eigenvalue of the action of an element of $C_G(G')$ on a $\mu=\sum_{i=1}^n m_i\omega_i$ weight subspace is uniquely determined by the value of its exponent, which can be given in the form of a congruence equation. In Tables~\ref{tab:discrete} and~\ref{tab:discrete.exc}, for each pair $L\supset L'$, we provide a generator of the center of each simple part of $L'$ as a congruence equation. Furthermore, we give the structure of the quotient $C_G(G')/Z(G)$ and an element that generates it. More precisely, if $C_G(G')/Z(G) = \left\langle x + Z(G)\right\rangle$, we give the element $x$, again as a congruence equation. That particular equation is really the \textit{relative congruence equation}, as it provides the new partial invariants for the summands in the branching rule. Since the element associated to the deleted node $\alpha_k$ is certainly a suitable $x$, we take $x$ to be $e^{\frac{2\pi i}{m_k}\sigma(\hat\omega_k)}$, where $\sigma$ is the automorphism corresponding to the projection matrix.

Note that for all cases where the index $k$ appears in Table~\ref{tab:discrete}, for example $B_n\supset B_k\oplus D_{n-k}$, the inequality $k \geq n-k$ holds.

\begin{example}
Let us consider the case $F_4\supset A_2\oplus A_2$. In this case the simple root $\alpha_2$ with mark 3 is deleted from the extended Dynkin diagram, where the dotted node represents the extension ($\alpha_0$):

\parbox{.6\linewidth}{\setlength{\unitlength}{2pt}
\def\kr{\circle{3}}
\def\pr{\circle*{3.5}}
\thicklines
\begin{picture}(200,25)
{\footnotesize
\put(60,15){\kr}\put(59.4,14){$\cdot$}
 \put(70,15){\kr} \put(80,15){\kr}
\put(90,15){\pr} \put(100,15){\pr} \put(61.5,15){\line(1,0){7}}
\put(71.5,15){\line(1,0){7}} \put(81,16){\line(1,0){8}}
\put(81,14){\line(1,0){8}} \put(91.5,15){\line(1,0){7}}
\put(60,10){\makebox(0,0){$1$}} \put(90,10){\makebox(0,0){$4$}}
\put(70,10){\makebox(0,0){$2$}} \put(100,10){\makebox(0,0){$2$}}
\put(80,10){\makebox(0,0){$3$}} 
\put(77.6,12.7){\line(1,1){5}}\put(77.4,17.7){\line(1,-1){5}}
}
\end{picture}}

First we determine the Weyl automorphism $\sigma$, which transforms the subalgebra $(A_2 \oplus A_2)'$ having coroots $\{\hat\alpha_0, \hat\alpha_1\}$ and $\{\hat\alpha_3, \hat\alpha_4\}$ to the corresponding subalgebra $A_2\oplus A_2$ used to produce the projection matrix for the branching rule for $F_4\supset A_2\oplus A_2$. This can accomplished by noting from the projection matrix \cite{MPR}
$$\left( \begin{smallmatrix}0&0&1&1\\0&2&1&0\\1&2&1&1\\1&1&1&0 \end{smallmatrix}\right)$$
that 
$$\sigma(\hat\alpha_0)=\hat\alpha_3+\hat\alpha_4; \sigma(\hat\alpha_1)=2\hat\alpha_2+\hat\alpha_3;
\sigma(\hat\alpha_3)=\hat\alpha_1+2\hat\alpha_2+\hat\alpha_3+\hat\alpha_4;
\sigma(\hat\alpha_4)=\hat\alpha_1+\hat\alpha_2+\hat\alpha_3.$$
Therefore the coroots of $A_2\oplus A_2$ in our chosen embedding are 
$$\{\hat\alpha_3+\hat\alpha_4; 2\hat\alpha_2+\hat\alpha_3\} \,;
\{\hat\alpha_1+2\hat\alpha_2+\hat\alpha_3+\hat\alpha_4;
\hat\alpha_1+\hat\alpha_2+\hat\alpha_3\} \,.$$ 

Applying the results of Table~\ref{tab:center} for $A_n$ to the case $n=2$, we know that the eigenvalue of the action of a generator of $Z(A_2)$ on a representation of highest weight $\mu=m_1\omega_1+m_2\omega_2$ is given by the congruence equation 
$$
m_1 + 2m_2 \mod 3\,.
$$
In our situation, this information can be easily translated : the eigenvalue of the action of a generator of the center of the first $A_2$ on a $\mu=\sum_{i=1}^4 m_i\omega_i$ weight subspace is given by the congruence equation 
$$
(m_3 +m_4) + 2(2m_2+m_3) \equiv m_2+m_4 \mod 3\,.
$$
Similarly, for the second $A_2$ we find
$$
(m_1+2m_2+m_3 +m_4) + 2(m_1+m_2+m_3) \equiv m_2+m_4 \mod 3\,.
$$
Therefore, if we define $a:=m_2+m_4 \mod 3$, we have that 
$$C_{F_4}(A_2\oplus A_2)\cong \Z_3=\langle a\rangle\,.$$

Now, since the center of $F_4$ is trivial, we already know that 
$$C_{F_4}(A_2\oplus A_2)/Z(F_4)\cong \Z_3=\langle a\rangle\,.$$ 
However, let us present a uniform method to compute a generator of the quotient. We want to find the action of $e^{\frac{2\pi i}{3}\sigma(\hat\omega_2)}$. First, we compute $$\hat\omega_2=3\hat\alpha_1+6\hat\alpha_2+4\hat\alpha_3+2\hat\alpha_4\,.$$ 
Since $\hat\alpha_0=-2\hat\alpha_1-3\hat\alpha_2-2\hat\alpha_3-\hat\alpha_4$ we conclude that
$$\sigma(\hat\alpha_2)=-\hat\alpha_1-3\hat\alpha_2-2\hat\alpha_3-\hat\alpha_4\,.$$ Therefore, by substitution, we have 
$$\sigma(\hat\omega_2)=-2\hat\alpha_2-3\hat\alpha_3-2\hat\alpha_4.$$
From this we have that the action of $e^{\frac{2\pi i}{3}\sigma(\hat\omega_2)}$ on a weight $\mu=\sum_{i=1}^4 m_i\omega_i$ is given in modular form by 
$$a:m_2+m_4 \mod 3$$
and so that
$$C_{F_4}(A_2\oplus A_2)/Z(F_4)\cong \Z_3=\langle a\rangle\,.$$
 In particular we have the branching rule 
$$(1,0,0,0)\supset (0,0)(1,1)[0]+(0,2)(1,0)[1]+(2,0)(0,1)[2]+(1,1)(0,0)[0]$$
where the term in square brackets is the relative congruence class and is to be interpreted modulo 3.
\end{example}

\begin{example}
Consider $B_3\supset A_3$. We know from \cite{LaP} that the projection matrix for that case is 
$$P=\left( \begin{smallmatrix}0&1&0\\1&0&0\\0&1&1 \end{smallmatrix}\right)\,,$$
and from Table~\ref{tab:discrete} (from the line $B_n\supset D_n$, n odd, with $n=3$) that the relative congruence equation is
$$a:=2m_1+3m_3 \mod 4\,.$$
Now consider the branching rule for the irreducible $B_3$ representation with highest weight $\omega_1=(1,0,0)$. We have :
\clearpage
\begin{table}[h]
{\footnotesize
   \centering
\begin{tabular}{|c|c|c|}
\hline   
Weight & \quad $2m_1+3m_3 \mod 4$\quad & Image under P \\
\hline   
(1,0,0) & 2 & (0,1,0) \\
(-1,1,0) & 2 & (1,-1,1) \\
(0,-1,2) & 2 & (-1,0,1) \\
(0,0,0) & 0 & (0,0,0) \\
(0,1,-2) & 2 & (1,0,-1) \\
(1,-1,0) & 2 & (-1,1,-1) \\
(-1,0,0) & 2 & (0,-1,0) \\
\hline
\end{tabular}}

\bigskip
\caption{The irreducible $B_3$ representation with highest weight $(1,0,0)$.}
\end{table}
\noindent and we can conclude that the relative congruence classes split entirely the two representations of $A_3$ here. The branching rule is
$$(1,0,0)\supset (0,1,0)[2]+(0,0,0)[0]$$
where the term in square brackets is to be interpreted modulo 4.

The same exercise with the irreducible $B_3$ representation with highest weight $\omega_3=(0,0,1)$ gives us :

\begin{table}[h]
{\footnotesize
  \centering
\begin{tabular}{|c|c|c|}
\hline   
Weight & \quad $2m_1+3m_3 \mod 4$\quad & Image under P \\
\hline   
(0,0,1) & 3 & (0,0,1) \\
(0,1,-1) & 1 & (1,0,0) \\
(1,-1,1) & 1 & (-1,1,0) \\
(-1,0,1) & 1 & (0,-1,1) \\
(1,0,-1) & 3 & (0,1,-1) \\
(-1,1,-1) & 3 & (1,-1,0) \\
(0,-1,1) & 3 & (-1,0,0) \\
(0,0,-1) & 1 & (0,0,-1) \\
\hline
\end{tabular}}

\bigskip
\caption{The irreducible $B_3$ representation with highest weight $(0,0,1)$.}
\end{table}
\noindent and again we can conclude that the relative congruence classes split entirely the two representations of $A_3$. The branching rule is
$$(0,0,1)\supset (0,0,1)[3]+(1,0,0)[1]$$
where the term in square brackets is to be interpreted modulo 4.

The difference in labels in the two cases -- $0$ and $2$ for the first one, $1$ and $3$ for the second -- is caused by the fact that these two representations of $B_3$ belong to two different congruence classes : in the first case $m_3 \equiv 0 \mod 2$ whereas in the second case $m_3 \equiv 1 \mod 2$.

\end{example}

\section{Continuous centralizers}

The maximal regular reductive subalgebras of a simple Lie algebra $L$ can again be easily described in terms of the Dynkin diagram of $L$. Explicitly any such subalgebra arises as the semisimple Lie algebra having its Dynkin diagram given by deleting one node of the Dynkin diagram of $L$ having mark equal to 1 direct sum with the 1-dimensional subalgebra $\C h_0$ consisting of the intersection of the kernels of the remaining roots. If $\alpha_k$ denotes the node of mark 1 deleted from the Dynkin diagram of $L$, we observe that the centralizer $C_{G}(G')$ of $G'$ in $G$  is generated by the center $Z(G)$ of $G$, the center $Z(G')$ of $G'$  together with the rank 1 subgroup $U_1:=\left\langle e^{i\theta \hat{\omega}_k}~|~\theta\in \R \right\rangle$.  In all cases it is easily verified that the center $Z(G')$ of $G'$ is contained in the subgroup $Z(G)\times U_1$.  The centralizers of maximal regular reductive subalgebras separate into two types.  Either $e^{2\pi i\hat{\omega}_k}$ generates the center of $G$ in which case the centralizer  of $G'$ in $G$  is $U_1$ or $e^{2\pi i\hat{\omega}_k}$ generates a proper subgroup of $Z(G)$ in which case the centralizer properly contains $U_1$ -- in fact, we have $$C_{G}(G')/U_1\simeq Z(G)/\left\langle e^{2\pi i \hat{\omega}_k}\right\rangle\,.$$ 
In some cases this second type of centralizer cannot be expressed as a direct product of subgroups.

The definition of \textit{relative congruence classes} introduced in Subsection 4.2 is also true for maximal regular reductive subalgebras.

\subsection{Explanation of Table~\ref{tab:continuous}}

Table~\ref{tab:continuous} presents the structure of the centralizers and the relative congruence relations for all maximal regular reductive subalgebras in classical and exceptional simple Lie algebras, whenever such a subalgebra is present. 

For each such algebra-subalgebra pair $L\supset L'$, with associated groups $G\supset G'$, we give the structure of the centralizer of $G'$ in $G$, $C_G(G')$, which is either a continuous group of rank 1 or a product, not necessarily direct, of a continuous group of rank 1 with a finite cyclic group. Furthermore, if $L'= L''\oplus H_1$ and $G'=G''\times U_1$, we give the modular relations associated with the centers $Z(G)$ and $Z(G'')$, the structure of $H_1$ as well as the relative congruence relation provided by the centralizer $C_G(G')$.

As we did for the discrete centralizers in Section 4, the embedding of the subalgebra $L'$ we choose for our task is the one provided by the corresponding projection matrix, which can be found in \cite{MPR, LNP, LaP}. 

Assume that $\{\alpha_1,\dots,\alpha_n\}$ and $\{\omega_1,\dots,\omega_n\}$ are the simple roots and the fundamental weights of the simple Lie algebra $L$, respectively. The semisimple part $L''$ of a maximal regular reductive subalgebra $L'$ of $L$ (i.e. $L'= L''\oplus H_1$) can be realized as the subalgebra with simple roots $\{\alpha_1,\dots, \alpha_n\}\setminus\{\alpha_k\}$ where $\alpha_k$ is a simple root of $L$ with mark $m_k=1$. We have 
$$U_1:= \left\langle e^{i\theta \hat{\omega}_k}~|~\theta\in \R \right\rangle$$
or equivalently
$$H_1:= \C(\hat\omega_k)\,.$$

In order to present this information in terms of the embedding provided by the projection matrix we first note that there exists a Weyl automorphism $\sigma$ of $L$ which transforms the subalgebra $L''$ to the corresponding subalgebra $L'''$ used to produce the projection matrix for the branching rule for $L\supset L'''$. We can then determine the simple roots of $L'''$, with associated Lie group $G'''$, and write them in terms of the fundamental weights $\omega_i$'s. To compute the $H_1$ summand, one only has to find the element of the Cartan subalgebra that is in the intersection of the kernels of these weights. This in turn provides a relative congruence relation. The net effect of all these calculations is that the projection matrix of $L\supset L'''\oplus H_1$ should be written as the projection matrix of $L\supset L'''$ with an additional row at the bottom -- when a weight of an irreducible representation of $L$ is multiplied by this projection matrix, the last coordinate will yield the relative congruence value for the weight.

Now, we know the continuous rank 1 group $U_1$ is contained in the centralizer of $G'''\times U_1$ in $G$, because
$$
C_G(G'''\times U_1)=\left\langle Z(G), Z(G'''\times U_1), U_1 \right\rangle \,.
$$
And since it is easy to show that 
$$Z(G'''\times U_1)\subseteq \left\langle Z(G), U_1 \right\rangle\,,$$   
it remains to determine whether or not the center of $G$ is contained in $U_1$ to be able to finally give the structure of the centralizer. 

\begin{example} 
Let us consider the case $E_6\supset D_5\oplus H_1$. We first note from \cite{MPR} that the projection matrix for $E_6\supset D_5$ is given by
$$P=\left( \begin{smallmatrix}0&1&1&1&0&0\\ 0&0&0&0&0&1\\ 0&0&1&0&0&0\\ 0&0&0&1&1&0\\ 1&1&0&0&0&0 \end{smallmatrix}\right)\,.$$
(Note that we could use directly the projection matrix for $E_6\supset D_5\oplus H_1$, also presented in \cite{MPR}, but we choose to show here the reasoning behind that last line of the matrix.)

Since all of the roots of $E_6$ have the same length we can see that the base of simple roots for $D_5$ is given by 
$$\{\alpha_2{+}\alpha_3{+}\alpha_4, \alpha_6, \alpha_3, \alpha_4{+}\alpha_5, \alpha_1{+}\alpha_2\}\,.$$ 
Writing the roots in terms of the fundamental weights of $E_6$, we get
$$\{{-}\omega_1{+}\omega_2{+}\omega_4{-}\omega_5{-}\omega_6,{-}\omega_3{+}2\omega_6, {-}\omega_2{+}2\omega_3{-}\omega_4{-}\omega_6, {-}\omega_3{+}\omega_4{+}\omega_5, \omega_1{+}\omega_2{-}\omega_3\}\,.$$
We can determine the element of the Cartan subalgebra that is in the intersection of the kernels of these weights by simply solving a homogeneous system of linear equations, and we find :
$$H_1:=\C (\hat{\alpha}_1 -\hat{\alpha}_2+\hat{\alpha}_4-\hat{\alpha}_5)\,.$$ 

Using Table~\ref{tab:center} we find that the center of $E_6$ is generated by  
$$e^{2\pi i \hat{\omega}_1}=e^{\frac{2\pi i}{3}(4\hat{\alpha}_1 +5\hat{\alpha}_2 + 6\hat{\alpha}_3 + 4\hat{\alpha}_4 + 2\hat{\alpha}_5 + 3\hat{\alpha}_6)}\,.$$
It follows that the congruence class of an irreducible representation with highest weight $\lambda = \sum_{i=1}^6 m_i\omega_i$ is determined by the value 
$$m_1+2m_2+m_4+2m_5 \equiv m_1-m_2+m_4-m_5 \mod {3}\,.$$
Again from Table~\ref{tab:center} we have that the center of $D_5$ is generated by 
$$e^{\frac{2\pi i}{4}(2(\hat{\alpha}_2 + \hat{\alpha}_3+\hat{\alpha}_4) + 2\hat{\alpha}_3 + 3(\hat{\alpha}_4+\hat{\alpha}_5) + 5(\hat{\alpha}_1 + \hat{\alpha}_2))}$$ 
which reduces to the modular condition 
$$m_1+3m_2+m_4+3m_5 \equiv m_1-m_2+m_4-m_5 \mod {4}\,.$$ 

Finally we observe that the continuous rank 1 group 
$$U_1:=\left\langle e^{i\theta(\hat{\alpha}_1-\hat{\alpha}_2+\hat{\alpha}_4 -\hat{\alpha}_5)}~|~\theta \in \R \right\rangle$$ 
is contained in the centralizer of $D_5\times U_1$ in $E_6$ and further that the center of $E_6$ is contained in $U_1$ (take $\theta= \frac{2\pi}{3}$) and that the center of $D_5$ is contained in $U_1$ (take $\theta=\frac{2\pi}{4}$). So we naturally have that the centralizer of $D_5\times U_1$ is equal to $U_1$. For this embedding the relative congruence condition can be written as 
$$m_1-m_2+m_4-m_5\,.$$ 

Note that the effect of having the center of $E_6$ in $U_1$ is that all weights in an irreducible $E_6$ representation will yield relative congruence values, i.e. the values of $m_1{-}m_2{+}m_4{-}m_5$, that will be congruent modulo 3. In other words if the $E_6$ irreducible representation has congruence class 0 then all the relative congruence values on the weights of this representation will be congruent to 0 modulo 3 ($0,~\pm 3,~\pm 6,$~\dots).

All these calculations imply that the projection matrix of $E_6\supset D_5\oplus H_1$ should be written as the projection matrix of $E_6\supset D_5$ with an additional row at the bottom given by 
$$\left( \begin{smallmatrix}1&{-}1&0&1&{-}1&0 \end{smallmatrix}\right)\,.$$ 
When a weight of an irreducible representation of $E_6$ is multiplied by this projection matrix, the last coordinate will yield the relative congruence value for the weight.

Now consider the branching rule for the irreducible 27-dimensional $E_6$ representation with highest weight $\omega_1=(1,0,0,0,0,0)$. We have :

\begin{table}[h]
{\footnotesize
  \centering
\begin{tabular}{|c|c|c|}
\hline   
Weight & $m_1{-}m_2{+}m_4{-}m_5$ & Image under P \\
\hline   
(1,0,0,0,0,0) & 1 & (0,0,0,0,1) \\
(-1,1,0,0,0,0) & -2 & (1,0,0,0,0) \\
(0,-1,1,0,0,0) & 1 & (0,0,1,0,-1) \\
(0,0,-1,1,0,1) & 1 & (0,1,-1,1,0) \\
(0,0,0,-1,1,1) & -2 & (-1,1,0,0,0) \\
(0,0,0,1,0,-1) & 1 & (1,-1,0,1,0) \\
(0,0,0,0,-1,1) & 1 & (0,1,0,-1,0) \\
(0,0,1,-1,1,-1) & -2 & (0,-1,1,0,0) \\
(0,0,1,0,-1,-1) & 1 & (1,-1,1,-1,0) \\
(0,1,-1,0,1,0) & -2 & (0,0,-1,1,1) \\
(1,-1,0,0,1,0) & 1 & (-1,0,0,1,0) \\
(0,1,-1,1,-1,0) & 1 & (1,0,-1,0,1) \\
(-1,0,0,0,1,0) & -2 & (0,0,0,1,-1) \\
(1,-1,0,1,-1,0) & 4 & (0,0,0,0,0) \\
(0,1,0,-1,0,0) & -2 & (0,0,0,-1,1) \\
(1,-1,1,-1,0,0) & 1 & (-1,0,1,-1,0) \\
(-1,0,0,1,-1,0) & 1 & (1,0,0,0,-1) \\
(-1,0,1,-1,0,0) & -2 & (0,0,1,-1,-1) \\
(1,0,-1,0,0,1) & 1 & (-1,1,-1,0,1) \\
(-1,1,-1,0,0,1) & -2 & (0,1,-1,0,0) \\
(1,0,0,0,0,-1) & 1 & (0,-1,0,0,1) \\
(0,-1,0,0,0,1) & 1 & (-1,1,0,0,-1) \\
(-1,1,0,0,0,-1) & -2 & (1,-1,0,0,0) \\
(0,-1,1,0,0,-1) & 1 & (0,-1,1,0,-1) \\
(0,0,-1,1,0,0) & 1 & (0,0,-1,1,0) \\
(0,0,0,-1,1,0) & -2 & (-1,0,0,0,0) \\
(0,0,0,0,-1,0) & 1 & (0,0,0,-1,0) \\
\hline
\end{tabular}}

\bigskip
\caption{The irreducible $E_6$ representation with highest weight $(1,0,0,0,0,0)$.}
\end{table}
\noindent and we can conclude that the relative congruence classes split entirely the three representations of $D_5$ here. The branching rule is
$$(1,0,0,0,0,0) \supset (0,0,0,0,1)[1] + (1,0,0,0,0)[-2] + (0,0,0,0,0)[4]$$
where the term in square brackets is the relative congruence class.

\end{example}

Finally, we discuss an example where the centralizer of the maximal regular reductive subalgebra is of the second type, i.e. where $e^{2\pi i\hat{\omega}_k}$ generates a proper subalgebra of $Z(G)$ in which case the centralizer properly contains $U_1$.

\begin{example}
Let us consider the case $D_4\supset A_3\oplus H_1$. We use directly the projection matrix for $D_4\supset A_3\oplus H_1$, that can be found in \cite{LaP} :
$$P=\left( \begin{smallmatrix}1&1&0&0\\ 0&0&0&1\\ 0&1&1&0\\ 1&0&1&0 \end{smallmatrix}\right)\,.$$

Since all of the roots of $D_4$ have the same length we can see that the base of simple roots for $A_3$ is given by 
$$\{\alpha_1{+}\alpha_2, \alpha_4, \alpha_2{+}\alpha_3\}\,.$$ 
Writing the roots in terms of the fundamental weights of $D_4$, we get
$$\{\omega_1{+}\omega_2{-}\omega_3{-}\omega_4,{-}\omega_2{+}2\omega_4, {-}\omega_1{+}\omega_2{+}\omega_3{-}\omega_4\}\,.$$
We can determine the element of the Cartan subalgebra that is in the intersection of the kernels of these weights by simply solving a homogeneous system of linear equations, and we find :
$$H_1:=\C (\hat{\alpha}_1 +\hat{\alpha}_3)\,.$$ 

Using Table~\ref{tab:center} we find that the center of $D_4$ is generated by  
$$e^{2\pi i \hat{\omega}_1}=e^{\frac{2\pi i}{2}(2\hat{\alpha}_1+2\hat{\alpha}_2 +\hat{\alpha}_3+ \hat{\alpha}_4)}$$
and
$$e^{2\pi i \hat{\omega}_4}=e^{\frac{2\pi i}{2}(\hat{\alpha}_1+2\hat{\alpha}_2+\hat{\alpha}_3+ 2\hat{\alpha}_4)}\,.$$
It follows that the congruence class of an irreducible representation with highest weight $\lambda = \sum_{i=1}^4 m_i\omega_i$ is determined by the values 
$$m_3+m_4 \mod {2}$$
and
$$m_1+m_3 \mod {2}\,.$$
Again from Table~\ref{tab:center} we have that the center of $A_3$ is generated by 
$$e^{\frac{2\pi i}{4}((\hat{\alpha}_1+\hat{\alpha}_2) + 2\hat{\alpha}_4 + 3(\hat{\alpha}_2+\hat{\alpha}_3))}=e^{\frac{2\pi i}{4}(\hat{\alpha}_1+4\hat{\alpha}_2+ 3\hat{\alpha}_3+2\hat{\alpha}_4)}$$ 
which reduces to the modular condition 
$$m_1+3m_3+2m_4 \mod {4}\,.$$ 

Finally we observe that the continuous rank 1 group 
$$U_1:=\left\langle e^{i\theta(\hat{\alpha}_1+\hat{\alpha}_3)}~|~\theta \in \R \right\rangle$$ 
is contained in the centralizer of $A_3\times U_1$ in $D_4$ and further that the center of $A_3$ is contained in $Z(D_4)\times U_1$ (take $e^{2\pi i \hat{\omega}_1}$ and $\theta=\frac{2\pi}{4}$). However, the center of $D_4$ is not contained in $U_1$ : instead, we have that $Z(D_4)\cap U_1$ is a proper subgroup of $Z(D_4)$ that yields the modular condition
$$m_1+m_3 \mod {2}\,.$$

In short, we know that 
$$C_{D_4}(A_3\times U_1)=\langle Z(D_4), U_1 \rangle$$
and that
$$Z(D_4)\cong\Z_2\times \Z_2\,,$$ 
and thus using the fact that
$$Z(D_4)\cap U_1\cong \Z_2\,,$$
we find that
$$C_{D_4}(A_3\times U_1)\cong U_1\times\Z_2\,.$$

For this embedding the relative congruence condition can be written as 
$$m_1+m_3\,,$$
which coincides with the last line of $P$.

Note that the effect of not having the center of $D_4$ contained in $U_1$ is that knowing the relative congruence value does not tell us which congruence class we are dealing with. For example the modules with highest weights $(1,0,0,0)$ and $(0,0,1,0)$ will both have odd relative congruence labels but these two modules are in different congruence classes -- $m_3 +m_4 \equiv 0 \mod 2$ for $(1,0,0,0)$, and $m_3 +m_4 \equiv 1 \mod 2$ for $(0,0,1,0)$ -- and hence their weight spaces must be distinguished by the action of the whole centralizer. However when trying to reduce a representation of $D_4$, the only information that can help splitting the $A_3$ representations is the relative congruence condition.

Let us consider the branching rule for the irreducible 8-dimensional $D_4$ representation with highest weight $\omega_1=(1,0,0,0)$. We have :

\begin{table}[h]
{\footnotesize
  \centering
\begin{tabular}{|c|c|c|}
\hline   
Weight & $m_1{+}m_3$ & Image under P \\
\hline   
(1,0,0,0) & 1 & (1,0,0)[1] \\
(-1,1,0,0) & -1 & (0,0,1)[-1] \\
(0,-1,1,1) & 1 & (-1,1,0)[1] \\
(0,0,-1,1) & -1 & (0,1,-1)[-1] \\
(0,0,1,-1) & 1 & (0,-1,1)[1] \\
(0,1,-1,-1) & -1 & (1,-1,0)[-1] \\
(1,-1,0,0) & 1 & (0,0,-1)[1] \\
(-1,0,0,0) & -1 & (-1,0,0)[-1] \\
\hline
\end{tabular}}

\bigskip
\caption{The irreducible $D_4$ representation with highest weight $(1,0,0,0)$.}
\end{table}
\noindent and we can conclude that the relative congruence classes split entirely the two representations of $A_3$ here. The branching rule is
$$(1,0,0,0)\supset (1,0,0)[1]+(0,0,1)[-1]$$
where the term in square brackets is the relative congruence class (and the value of the $H_1$ term).

Finally, the branching rule for the irreducible 28-dimensional $D_4$ representation with highest weight $\omega_2=(0,1,0,0)$ is
$$(0,1,0,0)\supset (1,0,1)[0]+(0,1,0)[2]+(0,1,0)[-2]+(0,0,0)[0]\,.$$
We have here an example where the representation of $A_3$ corresponding to a fixed eigenvalue is not be irreducible : the $A_3$ representations with highest weights $(1,0,1)$ and $(0,0,0)$ share the same relative congruence label $[0]$.

\end{example}

\centerline{\bf Acknowledgements}
\medskip

This work was supported in part by the Natural Sciences and Engineering Research Council of Canada and by the MIND Research Institute of Santa Ana, California. M. L. is grateful for the support she receives from the Alexander Graham Bell Scholarship. We would like to thank Iryna Kashuba, who was at the beginning of this work several years ago.


\clearpage


\begin{figure}
\parbox{.6\linewidth}{\setlength{\unitlength}{2pt}
\def\kr{\circle{3}}
\def\pr{\circle*{3.5}}
\thicklines
\begin{picture}(200,290)


{\bf \put(20,275){\makebox(0,0){Dynkin Numbering}}
\put(100,275){\makebox(0,0){Marks}}}
{\footnotesize \put(-25,255){\makebox(0,0){\large{$\mathbf{A_n}$}}}
\put(0,250){\circle{3}} 
\put(10,250){\circle{3}}
\put(30,250){\circle{3}} 
\put(40,250){\circle{3}}
\put(1.5,250){\line(1,0){7}} 
\put(11.5,250){\line(1,0){5}}
\multiput(18,250)(2,0){3}{\circle{0,1}}
\put(23.5,250){\line(1,0){5}} 
\put(31.5,250){\line(1,0){7}}
\put(20,260){\circle{3}} 
\put(1.5,251){\line(2,1){17}}
\put(38.5,251){\line(-2,1){17}} 
\put(0,245){\makebox(0,0){$1$}}
\put(30,245){\makebox(0,0){$n{-}1$}}
\put(10,245){\makebox(0,0){$2$}} 
\put(40,244.7){\makebox(0,0){$n$}}
\put(20,265){\makebox(0,0){$0$}}

\put(80,250){\circle{3}} 
\put(90,250){\circle{3}}
\put(110,250){\circle{3}} 
\put(120,250){\circle{3}}
\put(81.5,250){\line(1,0){7}} 
\put(91.5,250){\line(1,0){5}}
\multiput(98,250)(2,0){3}{\circle{0,1}}
\put(103.5,250){\line(1,0){5}} 
\put(111.5,250){\line(1,0){7}}
\put(100,260){\circle{3}} 
\put(81.5,251){\line(2,1){17}}
\put(118.5,251){\line(-2,1){17}} 
\put(80,245){\makebox(0,0){$1$}}
\put(110,245){\makebox(0,0){$1$}}
\put(90,245){\makebox(0,0){$1$}}
\put(120,245){\makebox(0,0){$1$}}
\put(100,265){\makebox(0,0){$1$}}

\put(-25,225){\makebox(0,0){{\large$\mathbf{B_n}$}}}
\put(0,230){\kr} 
\put(0,220){\kr} 
\put(10,225){\kr}
\put(30,225){\kr} 
\put(40,225){\pr} 
\put(8.5,226){\line(-2,1){7}}
\put(8.5,224){\line(-2,-1){7}} 
\put(11.5,225){\line(1,0){5}}
\put(23.5,225){\line(1,0){5}} 
\put(31,226){\line(1,0){8}}
\put(31,224){\line(1,0){8}}
\multiput(18,225)(2,0){3}{\circle{0,1}}
\put(0,235){\makebox(0,0){$0$}} 
\put(0,215){\makebox(0,0){$1$}}
\put(30,220){\makebox(0,0){$n{-}1$}}
\put(10,220){\makebox(0,0){$2$}} 
\put(40,219.7){\makebox(0,0){$n$}}

\put(80,230){\kr} 
\put(80,220){\kr} 
\put(90,225){\kr}
\put(110,225){\kr} 
\put(120,225){\pr}
\put(88.5,226){\line(-2,1){7}} 
\put(88.5,224){\line(-2,-1){7}}
\put(91.5,225){\line(1,0){5}} 
\put(103.5,225){\line(1,0){5}}
\put(111,226){\line(1,0){8}} 
\put(111,224){\line(1,0){8}}
\multiput(98,225)(2,0){3}{\circle{0,1}}
\put(80,235){\makebox(0,0){$1$}}
\put(80,215){\makebox(0,0){$1$}}
\put(110,220){\makebox(0,0){$2$}}
\put(90,220){\makebox(0,0){$2$}}
\put(120,220){\makebox(0,0){$2$}}

\put(-25,200){\makebox(0,0){{\large $\mathbf{C_n}$}}}
\put(0,200){\kr} 
\put(10,200){\pr} 
\put(30,200){\pr}
\put(40,200){\kr} 
\put(1.0,201){\line(1,0){8}}
\put(1.0,199){\line(1,0){8}} 
\put(11.5,200){\line(1,0){5}}
\put(23.5,200){\line(1,0){5}} 
\put(31,201){\line(1,0){7.8}}
\put(31,199){\line(1,0){7.9}}
\multiput(18,200)(2,0){3}{\circle{0,1}}
\put(0,195){\makebox(0,0){$0$}}
\put(30,195){\makebox(0,0){$n{-}1$}}
\put(10,195){\makebox(0,0){$1$}} 
\put(40,194.7){\makebox(0,0){$n$}}

\put(80,200){\kr} 
\put(90,200){\pr} 
\put(110,200){\pr}
\put(120,200){\kr} 
\put(81.0,201){\line(1,0){8}}
\put(81.0,199){\line(1,0){8}} 
\put(91.5,200){\line(1,0){5}}
\put(103.5,200){\line(1,0){5}} 
\put(111,201){\line(1,0){7.8}}
\put(111,199){\line(1,0){7.9}}
\multiput(98,200)(2,0){3}{\circle{0,1}}
\put(80,195){\makebox(0,0){$1$}}
\put(110,195){\makebox(0,0){$2$}}
\put(90,195){\makebox(0,0){$2$}}
\put(120,195){\makebox(0,0){$1$}}

\put(-25,170){\makebox(0,0){{\large $\mathbf{D_n}$}}}
\put(0,175){\kr} 
\put(0,165){\kr} 
\put(10,170){\kr}
\put(30,170){\kr} 
\put(40,175){\kr} 
\put(40,165){\kr}
\put(8.5,171){\line(-2,1){7}} 
\put(8.5,169){\line(-2,-1){7}}
\put(11.5,170){\line(1,0){5}} 
\put(23.5,170){\line(1,0){5}}
\put(31.5,171){\line(2,1){7}} 
\put(31.5,169){\line(2,-1){7}}
\multiput(18,170)(2,0){3}{\circle{0,1}}
\put(0,180){\makebox(0,0){$0$}} 
\put(0,160){\makebox(0,0){$1$}}
\put(29.8,165){\makebox(0,0){$n{-}2$}}
\put(10,165){\makebox(0,0){$2$}} 
\put(40,160){\makebox(0,0){$n$}}
\put(40,180){\makebox(0,0){$n{-}1$}}

\put(80,175){\kr} 
\put(80,165){\kr} 
\put(90,170){\kr}
\put(110,170){\kr} 
\put(120,175){\kr} 
\put(120,165){\kr}
\put(88.5,171){\line(-2,1){7}} 
\put(88.5,169){\line(-2,-1){7}}
\put(91.5,170){\line(1,0){5}} 
\put(103.5,170){\line(1,0){5}}
\put(111.5,171){\line(2,1){7}} 
\put(111.5,169){\line(2,-1){7}}
\multiput(98,170)(2,0){3}{\circle{0,1}}
\put(80,180){\makebox(0,0){$1$}}
\put(80,160){\makebox(0,0){$1$}}
\put(110,165){\makebox(0,0){$2$}}
\put(90,165){\makebox(0,0){$2$}}
\put(120,160){\makebox(0,0){$1$}}
\put(120,180){\makebox(0,0){$1$}}

\put(-25,135){\makebox(0,0){{\large $\mathbf{E_6}$}}}
\put(0,135){\kr} 
\put(10,135){\kr}  
\put(20,135){\kr}
\put(30,135){\kr} 
\put(40,135){\kr}  
\put(20,143){\kr}
\put(20,151){\kr} 
\put(1.5,135){\line(1,0){7}}
\put(11.5,135){\line(1,0){7}} 
\put(21.5,135){\line(1,0){7}}
\put(31.5,135){\line(1,0){7}} 
\put(20,136.5){\line(0,1){5}}
\put(20,144.5){\line(0,1){5}} 
\put(0,130){\makebox(0,0){$1$}}
\put(10,130){\makebox(0,0){$2$}} 
\put(20,130){\makebox(0,0){$3$}}
\put(30,130){\makebox(0,0){$4$}} 
\put(40,130){\makebox(0,0){$5$}}
\put(24,144){\makebox(0,0){$6$}} 
\put(24,152){\makebox(0,0){$0$}}

\put(80,135){\kr} 
\put(90,135){\kr} 
\put(100,135){\kr}
\put(110,135){\kr} 
\put(120,135){\kr} 
\put(100,143){\kr}
\put(100,151){\kr} 
\put(81.5,135){\line(1,0){7}}
\put(91.5,135){\line(1,0){7}} 
\put(101.5,135){\line(1,0){7}}
\put(111.5,135){\line(1,0){7}} 
\put(100,136.5){\line(0,1){5}}
\put(100,144.5){\line(0,1){5}} 
\put(80,130){\makebox(0,0){$1$}}
\put(90,130){\makebox(0,0){$2$}}
\put(100,130){\makebox(0,0){$3$}}
\put(110,130){\makebox(0,0){$2$}}
\put(120,130){\makebox(0,0){$1$}}
\put(104,144){\makebox(0,0){$2$}}
\put(104,152){\makebox(0,0){$1$}}

\put(-25,105){\makebox(0,0){{\large $\mathbf{E_7}$}}}
\put(0,105){\kr} 
\put(10,105){\kr}  
\put(20,105){\kr}
\put(30,105){\kr} 
\put(40,105){\kr}  
\put(50,105){\kr}
\put(60,105){\kr} 
\put(30,113){\kr} 
\put(1.5,105){\line(1,0){7}}
\put(11.5,105){\line(1,0){7}} 
\put(21.5,105){\line(1,0){7}}
\put(31.5,105){\line(1,0){7}} 
\put(41.5,105){\line(1,0){7}}
\put(30,106.5){\line(0,1){5}} 
\put(51.5,105){\line(1,0){7}}
\put(0,100){\makebox(0,0){$0$}} 
\put(10,100){\makebox(0,0){$1$}}
\put(20,100){\makebox(0,0){$2$}} 
\put(30,100){\makebox(0,0){$3$}}
\put(40,100){\makebox(0,0){$4$}} 
\put(60,100){\makebox(0,0){$6$}}
\put(34,114){\makebox(0,0){$7$}} 
\put(50,100){\makebox(0,0){$5$}}

\put(80,105){\kr} 
\put(90,105){\kr} 
\put(100,105){\kr}
\put(110,105){\kr} 
\put(120,105){\kr} 
\put(130,105){\kr}
\put(140,105){\kr} 
\put(110,113){\kr}
\put(81.5,105){\line(1,0){7}} 
\put(91.5,105){\line(1,0){7}}
\put(101.5,105){\line(1,0){7}} 
\put(111.5,105){\line(1,0){7}}
\put(121.5,105){\line(1,0){7}} 
\put(110,106.5){\line(0,1){5}}
\put(131.5,105){\line(1,0){7}} 
\put(80,100){\makebox(0,0){$1$}}
\put(90,100){\makebox(0,0){$2$}}
\put(100,100){\makebox(0,0){$3$}}
\put(110,100){\makebox(0,0){$4$}}
\put(120,100){\makebox(0,0){$3$}}
\put(140,100){\makebox(0,0){$1$}}
\put(114,114){\makebox(0,0){$2$}}
\put(130,100){\makebox(0,0){$2$}}

\put(-25,75){\makebox(0,0){{\large $\mathbf{E_8}$}}}
\put(-10,75){\kr} 
\put(0,75){\kr} 
\put(10,75){\kr}
\put(20,75){\kr} 
\put(30,75){\kr} 
\put(40,75){\kr}
\put(50,75){\kr} 
\put(60,75){\kr} 
\put(40,83){\kr}
\put(-8.5,75){\line(1,0){7}} 
\put(1.5,75){\line(1,0){7}}
\put(11.5,75){\line(1,0){7}} 
\put(21.5,75){\line(1,0){7}}
\put(31.5,75){\line(1,0){7}} 
\put(41.5,75){\line(1,0){7}}
\put(51.5,75){\line(1,0){7}} 
\put(40,76.5){\line(0,1){5}}
\put(0,70){\makebox(0,0){$1$}} 
\put(10,70){\makebox(0,0){$2$}}
\put(20,70){\makebox(0,0){$3$}} 
\put(30,70){\makebox(0,0){$4$}}
\put(40,70){\makebox(0,0){$5$}} 
\put(60,70){\makebox(0,0){$7$}}
\put(44,84){\makebox(0,0){$8$}} 
\put(50,70){\makebox(0,0){$6$}}
\put(-10,70){\makebox(0,0){$0$}}

\put(75,75){\kr} 
\put(85,75){\kr} 
\put(95,75){\kr}
\put(105,75){\kr} 
\put(115,75){\kr} 
\put(125,75){\kr}
\put(135,75){\kr} 
\put(145,75){\kr} 
\put(125,83){\kr}
\put(76.5,75){\line(1,0){7}} 
\put(86.5,75){\line(1,0){7}}
\put(96.5,75){\line(1,0){7}} 
\put(106.5,75){\line(1,0){7}}
\put(116.5,75){\line(1,0){7}} 
\put(126.5,75){\line(1,0){7}}
\put(136.5,75){\line(1,0){7}} 
\put(125,76.5){\line(0,1){5}}
\put(85,70){\makebox(0,0){$2$}} 
\put(95,70){\makebox(0,0){$3$}}
\put(105,70){\makebox(0,0){$4$}} 
\put(115,70){\makebox(0,0){$5$}}
\put(125,70){\makebox(0,0){$6$}} 
\put(145,70){\makebox(0,0){$2$}}
\put(129,84){\makebox(0,0){$3$}} 
\put(135,70){\makebox(0,0){$4$}}
\put(75,70){\makebox(0,0){$1$}}

\put(-25,50){\makebox(0,0){\large {$\mathbf{F_4}$}}}
\put(0,50){\kr} 
\put(10,50){\kr} 
\put(20,50){\kr}
\put(30,50){\pr} 
\put(40,50){\pr} 
\put(1.5,50){\line(1,0){7}}
\put(11.5,50){\line(1,0){7}} 
\put(21.2,51){\line(1,0){8}}
\put(21.2,49){\line(1,0){8}} 
\put(31.5,50){\line(1,0){7}}
\put(0,45){\makebox(0,0){$0$}} 
\put(30,45){\makebox(0,0){$3$}}
\put(10,45){\makebox(0,0){$1$}} 
\put(40,45){\makebox(0,0){$4$}}
\put(20,45){\makebox(0,0){$2$}}

\put(80,50){\kr} 
\put(90,50){\kr} 
\put(100,50){\kr}
\put(110,50){\pr} 
\put(120,50){\pr} 
\put(81.5,50){\line(1,0){7}}
\put(91.5,50){\line(1,0){7}} 
\put(101,51){\line(1,0){8}}
\put(101,49){\line(1,0){8}} 
\put(111.5,50){\line(1,0){7}}
\put(80,45){\makebox(0,0){$1$}} 
\put(110,45){\makebox(0,0){$4$}}
\put(90,45){\makebox(0,0){$2$}} 
\put(120,45){\makebox(0,0){$2$}}
\put(100,45){\makebox(0,0){$3$}}

\put(-25,25){\makebox(0,0){{\large $\mathbf{G_2}$}}}
\put(0,25){\kr} 
\put(10,25){\kr} 
\put(20,25){\pr}
\put(11.5,25){\line(1,0){7}} 
\put(1.5,25){\line(1,0){7}}
\put(10,26.5){\line(1,0){10}} 
\put(10,23.5){\line(1,0){10}}
\put(0,20){\makebox(0,0){$0$}} 
\put(520,20){\makebox(0,0){$2$}}
\put(10,20){\makebox(0,0){$1$}}

\put(80,25){\kr} 
\put(90,25){\kr} 
\put(100,25){\pr}
\put(91.5,25){\line(1,0){7}} 
\put(81.5,25){\line(1,0){7}}
\put(90,26.5){\line(1,0){10}} 
\put(90,23.5){\line(1,0){10}}
\put(80,20){\makebox(0,0){$1$}} 
\put(100,20){\makebox(0,0){$3$}}
\put(90,20){\makebox(0,0){$2$}}}

\end{picture}}
\caption{The Dynkin numbering of the extended diagrams with marks.}
\label{fig:dynkin+marks}
\end{figure}
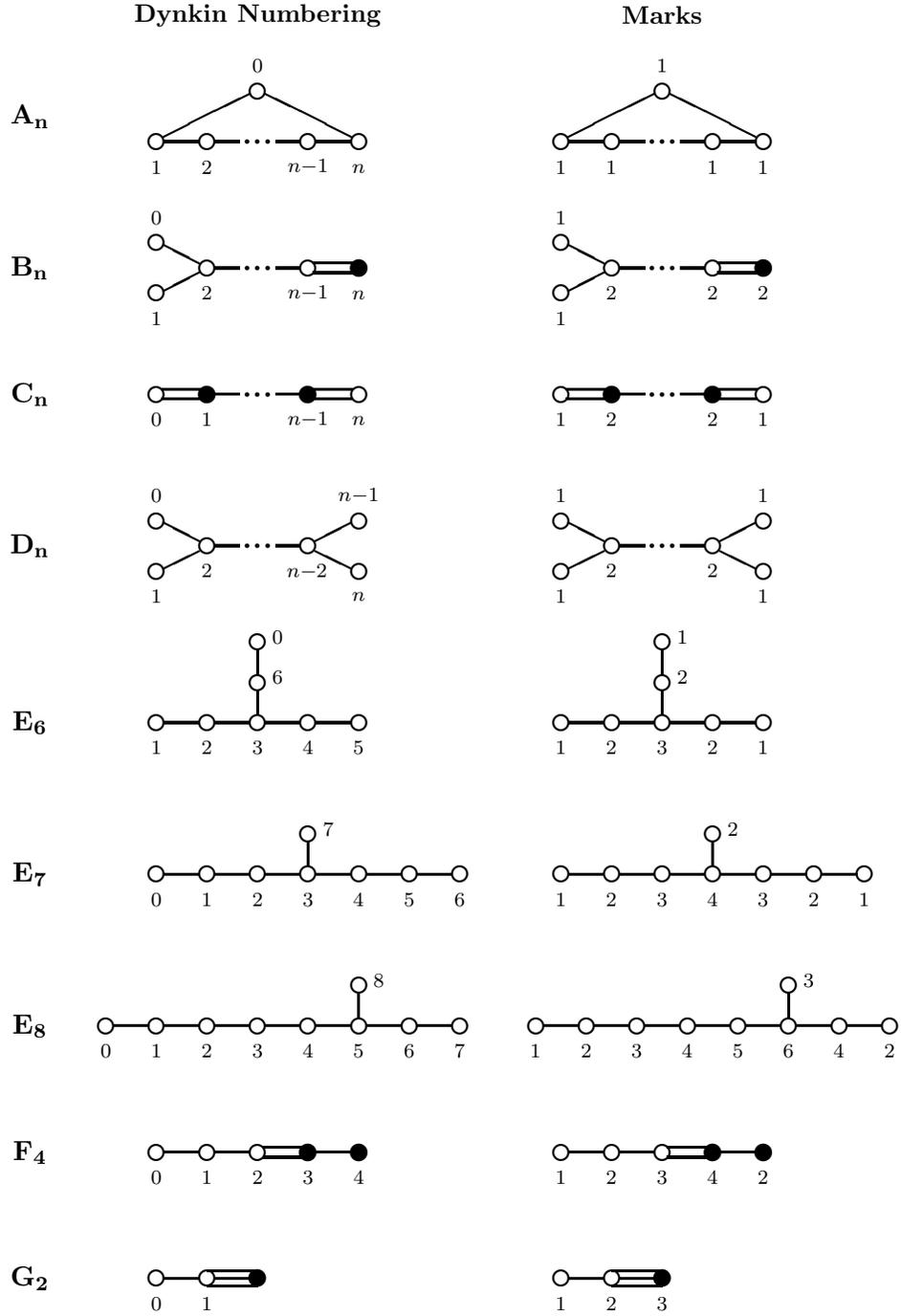

\clearpage

\begin{table}
\parbox{.6\linewidth}{\setlength{\unitlength}{1.95pt}
\def\kr{\circle{3}}
\def\pr{\circle*{3.5}}
\thicklines
\begin{picture}(200,280)

\put(-33,270){\line(1,0){175}} 
\put(-33,210){\line(1,0){175}}
\put(-33,150){\line(1,0){175}} 
\put(-33,80){\line(1,0){175}}
\put(-33,3){\line(1,0){175}}

\put(-33,270){\line(0,-1){267}} 
\put(142,270){\line(0,-1){267}}
\put(60,210){\line(0,-1){60}} 
\put(60,80){\line(0,-1){77}}

\put(-25.5,255){\makebox(0,0){\large{$\mathbf{A_n}$}}}
\put(-11.5,245){\makebox(0,0){\bf {Center $\cong\Z_{n+1}$}}}
\put(-3,235){\makebox(0,0){\bf {Generator: exp\,$2\pi i\hat{\omega}_1$}}}

\put(40,240){\circle{3}} 
\put(50,260){\circle{3}}
\put(50,250){\circle{3}} 
\put(50,230){\circle{3}}
\put(50,251.5){\line(0,1){7}} 
\put(50,248.5){\line(0,-1){5}}
\multiput(50,238)(0,2){3}{\circle{0,1}}
\put(50,231.5){\line(0,1){5}} 
\put(50,221.5){\line(0,1){7}}
\put(50,220){\circle{3}} 
\put(40.5,241.5){\line(1,2){8.5}}
\put(40.5,238.5){\line(1,-2){8.5}} {\footnotesize
\put(36.5,240){\makebox(0,0){$1$}}
\put(95,260){\makebox(0,0){$n(m_1+2m_2+\dots +nm_n)$ \qquad \,mod$\,n+1$}} 
\put(95,250){\makebox(0,0){$(n{-}1)(m_1+2m_2+\dots+nm_n)$ \ \ mod$\,n+1$}} \multiput(75,238)(0,2){3}{\circle{0,1}} 
\put(95,230){\makebox(0,0){$2(m_1+2m_2+\dots+nm_n)$ \qquad \ \ mod$\,n+1$}}
\put(95,220){\makebox(0,0){$m_1+2m_2+\dots +nm_n $ \qquad \qquad mod$\,n+1$}}
}

\put(-25,190){\makebox(0,0){{\large$\mathbf{B_n}$}}}
\put(-13.5,180){\makebox(0,0){\bf {Center $\cong\Z_2$}}}
\put(-2,170){\makebox(0,0){\bf {Generator: exp\,$2\pi i\hat{\omega}_1$}}}

\put(25,200){\kr} 
\put(35,200){\kr} 
\put(30,190){\kr}
\put(30,170){\kr} 
\put(30,160){\pr}
\put(29,191.5){\line(-1,2){3.5}} 
\put(31,191.5){\line(1,2){3.5}}
\put(30,171.5){\line(0,1){5}} 
\put(30,188.5){\line(0,-1){5}}
\put(31,161){\line(0,1){8}} 
\put(29,161){\line(0,1){8}}
\multiput(30,178)(0,2){3}{\circle{0,1}} {\footnotesize
\put(21,200){\makebox(0,0){$1$}}
\put(47.5,200){\makebox(0,0){$m_n$\,mod\,$2$}}}

\put(87.5,190){\makebox(0,0){{\large$\mathbf{C_n}$}}}
\put(99,180){\makebox(0,0){\bf {Center $\cong\Z_2$}}}
\put(111,170){\makebox(0,0){\bf {Generator: exp\,$2\pi i\hat{\omega}_n$}}}

\put(75,200){\kr} 
\put(75,190){\pr} 
\put(75,170){\pr}
\put(75,160){\kr} 
\put(76,199){\line(0,-1){8}}
\put(74,199){\line(0,-1){8}} 
\put(75,188.5){\line(0,-1){5}}
\put(75,171.5){\line(0,1){5}} 
\put(74,161.2){\line(0,1){8}}
\put(76,161.2){\line(0,1){8}}
\multiput(75,178)(0,2){3}{\circle{0,1}} {\footnotesize
\put(79,200){\makebox(0,0){$1$}}
\put(105,160){\makebox(0,0){$m_1+m_3+m_5+\dots $\,mod\,$2$}}}

\put(60,140){\makebox(0,0){{\large $\mathbf{D_n}$}}}
\put(8,130){\makebox(0,0){\bf {For $n$ even: \ Center $\cong\Z_2\times \Z_2$}}} 
\put(11.5,120){\makebox(0,0){\bf{Generators: exp\,$2\pi i\hat{\omega}_1$, exp\,$2\pi i\hat{\omega}_n$}}}

\put(105,130){\makebox(0,0){\bf {For $n$ odd: \ Center $\cong\Z_4$}}} \put(101,120){\makebox(0,0){\bf {Generator: exp\,$2\pi i\hat{\omega}_n$}}}

\put(8,102){\kr} 
\put(8,92){\kr} 
\put(18,97){\kr}
\put(38,97){\kr} 
\put(48,102){\kr} 
\put(48,92){\kr}
\put(16.5,98){\line(-2,1){7}} 
\put(16.5,96){\line(-2,-1){7}}
\put(19.5,97){\line(1,0){5}} 
\put(31.5,97){\line(1,0){5}}
\put(39.5,98){\line(2,1){7}} 
\put(39.5,96){\line(2,-1){7}}
\multiput(26,97)(2,0){3}{\circle{0,1}} {\footnotesize
\put(4,102){\makebox(0,0){$1$}}
\put(-12,91){\makebox(0,0){$m_{n-1}+m_n$\,mod\,2}}
\put(94.5,102){\makebox(0,0){$2m_1+2m_3+\dots+nm_{n-1}+(n{-}2)m_n$\,mod\,$4$}}
\put(94.5,91){\makebox(0,0){$2m_1+2m_3+\dots+(n{-}2)m_{n-1}+nm_n$\,mod\,$4$}}}

\put(6.5,50){\makebox(0,0){{\large$\mathbf{E_6}$}}}
\put(18,40){\makebox(0,0){\bf {Center $\cong\Z_3$}}}
\put(30,30){\makebox(0,0){\bf {Generator: exp\,$2\pi i\hat{\omega}_1$}}}

\put(-9,60){\kr} 
\put(-9,50){\kr}  
\put(-9,40){\kr}
\put(-9,30){\kr} 
\put(-9,20){\kr}  
\put(-17,40){\kr}
\put(-25,40){\kr} 
\put(-9,58.5){\line(0,-1){7}}
\put(-9,48.5){\line(0,-1){7}} 
\put(-9,38.5){\line(0,-1){7}}
\put(-9,28.5){\line(0,-1){7}} 
\put(-10.5,40){\line(-1,0){5}}
\put(-18.5,40){\line(-1,0){5}}{\footnotesize
\put(-25,45){\makebox(0,0){$1$}}
\put(22,60){\makebox(0,0){$m_1+2m_2+m_4+2m_5$\,mod\,3}}
\put(22,20){\makebox(0,0){$2m_1+m_2+2m_4+m_5$\,mod\,3}}}

\put(88,50){\makebox(0,0){{\large$\mathbf{E_7}$}}}
\put(99.5,40){\makebox(0,0){\bf {Center $\cong\Z_2$}}}
\put(111,30){\makebox(0,0){\bf {Generator: exp\,$2\pi i\hat{\omega}_6$}}}

\put(75,70){\kr}
\put(75,60){\kr} 
\put(75,50){\kr}
\put(75,40){\kr} 
\put(75,30){\kr} 
\put(75,20){\kr}
\put(75,10){\kr} 
\put(67,40){\kr} 
\put(75,68.5){\line(0,-1){7}}
\put(75,58.5){\line(0,-1){7}} 
\put(75,48.5){\line(0,-1){7}}
\put(75,38.5){\line(0,-1){7}} 
\put(75,28.5){\line(0,-1){7}}
\put(68.5,40){\line(1,0){5}} 
\put(75,18.5){\line(0,-1){7}}
{\footnotesize
\put(98,70){\makebox(0,0){$m_4+m_6+m_7$\,mod\,$2$}}
\put(79,10){\makebox(0,0){$1$}}}

\end{picture}}
\bigskip
\caption{Eigenvalues of central elements $z=$ exp\,$2\pi i\hat\omega_j$ on an irreducible representation of highest weight $\lambda=\sum_{i=1}^n m_i \omega_i$.}
\label{tab:center}
\end{table}
\clearpage

\begin{table}
\begin{center}
{\footnotesize
\begin{tabular}{|c|c|c|}

\hline
$\,\mathbf{L\supset L'}\,$ & $\mathbf{C_{G}(G')}$ &$\mathbf{\frac{C_G(G')}{Z(G)}}$\rule{0pt}{12pt}\\

\hline
\hline
$\,B_2\supset A_1\oplus A_1\,$&  $\,\Z_2\times \Z_2=\langle a,b \rangle$
& $\,\Z_2\,$ \rule{0pt}{12pt}\\
\hline
$\,A_1\,$ & $a:m_1+m_2\,$mod$\,2$&  $a$ \rule{0pt}{12pt}\\
\cline{1-2}
$\,A_1\,$ & $b:m_1\,$mod$\,2$& \rule{0pt}{12pt}\\

\hline
\hline
$B_n\supset B_{n-2}\oplus A_1\oplus A_1$&  $\,\Z_2\times \Z_2=\langle a,b \rangle$
& $\,\Z_2\,$ \rule{0pt}{12pt}\\
\hline
$\,B_{n-2}\,$ & $\,a:m_n\,$mod$\,2$ &  \rule{0pt}{12pt}\\
\cline{1-2}
$\,A_1\,$ & $b:m_{n-2}+m_{n-1}+m_{n}\,$mod$\,2$&  $b$ \rule{0pt}{12pt}\\
\cline{1-2}
$\,A_1\,$ & $a{+}b:m_{n-2}+m_{n-1}\,$mod$\,2$& \rule{0pt}{12pt}\\
\hline
\hline
$B_4\supset A_1\oplus A_3$&  $\,\Z_4=\langle a\rangle$
& $\,\Z_2\,$ \rule{0pt}{12pt}\\
\hline
$\,A_1\,$ & $\,2a:m_4\,$mod$\,2$ &  $a$ \rule{0pt}{12pt}\\
\cline{1-2}
$\,A_3\,$ & $a:2m_1+3m_4\,$mod$\,4$&  \rule{0pt}{12pt}\\
\hline
\hline
$B_n\supset B_{n-3}\oplus A_3$&  $\,\Z_4=\langle a\rangle$
& $\,\Z_2\,$ \rule{0pt}{12pt}\\
\hline
$\,B_{n-3}\,$ & $\,2a:m_n\,$mod$\,2$ &  $a$ \rule{0pt}{12pt}\\
\cline{1-2}
$\,A_3\,$ & $a:2m_{n-4}+2m_{n-3}+3m_{n}\,$mod$\,4$&  \rule{0pt}{12pt}\\
\hline
\hline

$B_n\supset D_{n}$ & ($n$ odd)  \qquad \qquad \quad  \ $\,\Z_4=\langle a\rangle$ \hspace{2.4cm} \
& $\,\Z_2\,$ \rule{0pt}{12pt}\\
\hline
$\,D_n\,$ & $a:2m_1+2m_3+\dots+2m_{n-2}+nm_{n}\,$mod$\,4$&  $a$ \rule{0pt}{12pt}\\
\hline
\hline
$B_n\supset D_{n}$ &  ($n$ even) \ \qquad \quad $\,\Z_2\times\Z_2=\langle a,b\rangle$
\hspace{2cm} \  & $\,\Z_2\,$ \rule{0pt}{12pt}\\
\hline
$\,D_n\,$ & $a:m_1+m_3+\dots+m_{n-1}+\tfrac{n}{2}m_n\,$mod$\,2\,$ & $a$\rule{0pt}{12pt}\\ 
& and $\,b:m_n\,$mod$\,2$&   \rule{0pt}{12pt}\\
\hline
\hline

$B_n\supset D_{n-1}\oplus A_1$ &  ($n$ odd) \qquad  \quad \ \ $\,\Z_2\times\Z_2=\langle a,b\rangle$
 \hspace{2cm} \  & $\,\Z_2\,$ \rule{0pt}{12pt}\\
\hline
$\,D_{n-1}\,$ & $\,a:m_1+m_3+\dots+m_{n-2}+m_{n-1}$ & \rule{0pt}{12pt}\\ 
& $+\tfrac{n-1}{2}m_{n}\,$mod$\,2$ \quad and \quad $b:m_{n}\,$mod$\,2$ &  $a$ \rule{0pt}{12pt}\\
\cline{1-2}
$\,A_1\,$ & $b:m_{n}\,$mod$\,2$&  \rule{0pt}{12pt}\\

\hline
\hline

$B_n\supset D_{n-1}\oplus A_1$ &  ($n$ even) \qquad \qquad \quad $\,\Z_4=\langle a\rangle$
\hspace{2.5cm} & $\,\Z_2\,$ \rule{0pt}{12pt}\\
\hline
$\,D_{n-1}\,$ & $\,a:2m_1+2m_3+\dots+2m_{n-3}+(n-1)m_{n}\,$mod$\,4$ &  $a$ \rule{0pt}{12pt}\\
\cline{1-2}
$\,A_1\,$ & $2a:m_{n}\,$mod$\,2$&  \rule{0pt}{12pt}\\

\hline
\hline
$B_n\supset B_k\oplus D_{n-k}$ & ($n-k$ even) \quad \  $\,\Z_2\times\Z_2=\langle a,b\rangle$
\hspace{2.cm} & $\,\Z_2\,$ \rule{0pt}{12pt}\\
\hline
$\,B_{k}\,$ & $\,a:m_{n}\,$mod$\,2$ &   \rule{0pt}{12pt}\\
\cline{1-2}
$\,D_{n-k}\,$ & $a\,$ and $b:\displaystyle\sum_{i=0}^{(n-k-4)/2}(m_{2k-n+4i+2}+m_{2k-n+4i+3})$ & $b$ \rule{0pt}{12pt}\\
& $+m_{n-2}+m_{n-1}+\tfrac{n-k}{2}m_{n}\,$mod$\,2$ &  \rule{0pt}{12pt}\\
\hline
\end{tabular}}
\end{center}

\bigskip
\caption{Discrete centralizers and relative congruence classes of irreducible representations of the classical simple Lie algebras.}
\label{tab:discrete}
\end{table}
\clearpage

\begin{center}
{\footnotesize
\begin{tabular}{|c|c|c|}
\hline
$\,\mathbf{L\supset L'}\,$ & $\mathbf{C_{G}(G')}$ & $\mathbf{\frac{C_G(G')}{Z(G)}}$\rule{0pt}{12pt}\\
\hline
\hline

$B_n\supset B_k\oplus D_{n-k}$ & ($n-k$ odd) \ \ \ \quad \quad $\,\Z_4=\langle a\rangle$
\hspace{2.5cm} & $\,\Z_2\,$ \rule{0pt}{12pt}\\
\hline
$\,B_{k}\,$ & $\,2a:m_{n}\,$mod$\,2$ &   \rule{0pt}{12pt}\\
\cline{1-2}
$\,D_{n-k}\,$ &$a:\displaystyle\sum_{i=0}^{(n-k-3)/2}2(m_{2k-n+4i+2}+m_{2k-n+4i+3})$ & $a$ \rule{0pt}{12pt}\\
& $+(n-k)m_{n}\,$mod$\,4$ &
\rule{0pt}{12pt}\\

\hline
\hline

$B_n\supset D_k\oplus B_{n-k}$ & ($n,k$ odd) \qquad \qquad   $\,\Z_4=\langle a\rangle$
\hspace{2.5cm} & $\,\Z_2\,$ \rule{0pt}{12pt}\\
\hline
$\,D_{k}\,$ & $a:\displaystyle\sum_{i=0}^{(2k-n-3)/2}2m_{2i+1}$ &  \rule{0pt}{12pt}\\
& $+\displaystyle\sum_{j=0}^{(n-k-2)/2}2(m_{2k-n+4j}+m_{2k-n+4j+1})$ & $a$\rule{0pt}{12pt}\\
& $+km_{n}\,$mod$\,4$& \rule{0pt}{12pt}\\
\cline{1-2}
$\,B_{n-k}\,$ & $\,2a:m_{n}\,$mod$\,2$ &   \rule{0pt}{12pt}\\
\hline
\hline

$B_n\supset D_k\oplus B_{n-k}$ & ($n$ even, $k$ odd) \quad  \
 $\,\Z_4=\langle a\rangle$
\hspace{2.4cm} & $\,\Z_2\,$ \rule{0pt}{12pt}\\
\hline
$\,D_{k}\,$ &  $\,a:\displaystyle\sum_{i=0}^{(2k-n-2)/2}2m_{2i+1}$ & \rule{0pt}{12pt}\\
&$+\displaystyle\sum_{j=0}^{(n-k-3)/2}2(m_{2k-n+4j+2}+m_{2k-n+4j+3})$& $a$ \rule{0pt}{12pt}\\
& $+km_{n}\,$mod$\,4$& \rule{0pt}{12pt}\\
\cline{1-2}
$\,B_{n-k}\,$ & $\,2a:m_{n}\,$mod$\,2$ &
\rule{0pt}{12pt}\\
\hline
\hline

$B_n\supset D_k\oplus B_{n-k}$ & ($n,k$ even) \qquad 
$\,\Z_2\times\Z_2=\langle a, b\rangle$
\hspace{2cm} & $\,\Z_2\,$ \rule{0pt}{12pt}\\
\hline
$\,D_{k}\,$ & $\,a:m_{n}\,$mod$\,2$ \quad and  \quad $\,b: \displaystyle\sum_{i=0}^{(2k-n-2)/2}m_{2i+1}
$ & \rule{0pt}{12pt}\\
& $+\displaystyle\sum_{j=0}^{(n-k-2)/2}(m_{2k-n+4j+2}+m_{2k-n+4j+3})$ & $b$ \rule{0pt}{12pt}\\
& $+\tfrac{k}{2}m_{n}\,$mod$\,2$& \rule{0pt}{12pt}\\
\cline{1-2}
$\,B_{n-k}\,$ & $\,a:m_{n}\,$mod$\,2$ &   \rule{0pt}{12pt}\\

\hline
\hline

$B_n\supset D_k\oplus B_{n-k}$ & ($n$ odd, $k$ even)  
$\,\Z_2\times\Z_2=\langle a, b\rangle$ 
\hspace{2cm} & $\,\Z_2\,$ \rule{0pt}{12pt}\\
\hline
$\,D_{k}\,$ & $\,a:m_{n}\,$mod$\,2$ \quad and  \quad $\,b:\displaystyle\sum_{i=0}^{(2k-n-3)/2}m_{2i+1}$ &  \rule{0pt}{12pt}\\
& $+\displaystyle\sum_{j=0}^{(n-k-1)/2}(m_{2k-n+4j}+m_{2k-n+4j+1})$ & $b$ \rule{0pt}{12pt}\\
& $+\tfrac{k}{2}m_{n}\,$mod$\,2$& \rule{0pt}{12pt}\\
\cline{1-2}
$\,B_{n-k}\,$ & $\,a:m_{n}\,$mod$\,2$ &   \rule{0pt}{12pt}\\
\hline
\hline

$C_n\supset C_{n-1}\oplus A_1$ &  ($n$ even) \quad \quad \ 
$\,\Z_2\times\Z_2=\langle a,b\rangle$ \hspace{2cm}
& $\,\Z_2\,$ \rule{0pt}{12pt}\\
\hline
$\,C_{n-1}\,$ & $\,a:m_1+m_3+\dots+m_{n-1}+m_{n}\,$mod$\,2$ &  $a$ \rule{0pt}{12pt}\\
\cline{1-2}
$\,A_1\,$ & $b:m_{n}\,$mod$\,2$&  \rule{0pt}{12pt}\\
\hline

\end{tabular}}

\bigskip
\sc{Table~\ref{tab:discrete}.} \normalfont (continued)
\end{center}
\clearpage

\begin{center}
{\footnotesize
\begin{tabular}{|c|c|c|}

\hline
$\,\mathbf{L\supset L'}\,$ & $\mathbf{C_{G}(G')}$ & $\mathbf{\frac{C_G(G')}{Z(G)}}$\rule{0pt}{12pt}\\
\hline
\hline
$C_n\supset C_{n-1}\oplus A_1$ & ($n$ odd) \quad \quad \quad \ 
$\,\Z_2\times\Z_2=\langle a,b\rangle$ \hspace{2cm} & $\,\Z_2\,$ \rule{0pt}{12pt}\\
\hline
$\,C_{n-1}\,$ & $\,a:m_1+m_3+\dots+m_{n-2}\,$mod$\,2$ &  $a$ \rule{0pt}{12pt}\\
\cline{1-2}
$\,A_1\,$ & $b:m_{n}\,$mod$\,2$&  \rule{0pt}{12pt}\\
\hline
\hline

$C_n\supset C_{k}\oplus C_{n-k}$ & ($n,k$ odd) \quad \quad \
$\,\Z_2\times\Z_2=\langle a,b\rangle$
 \hspace{1.9cm} & $\,\Z_2\,$ \rule{0pt}{12pt}\\
\hline
$\,C_{k}\,$ & $\,a:\displaystyle\sum_{i=1}^{(2k-n+1)/2} m_{2i-1}$ & \rule{0pt}{12pt}\\
& $+\displaystyle\sum_{j=1}^{{(n-k)}/2}
 (m_{2k-n+4j-1}+m_{2k-n+4j})\,$mod$\,2$ &  $a$ \rule{0pt}{12pt}\\
\cline{1-2}
$\,C_{n-k}\,$ & $b:\displaystyle\sum_{j=1}^{{(n-k)}/2}
 (m_{2k-n+4j-2}+m_{2k-n+4j-1})\,\,$mod$\,2$&  \rule{0pt}{12pt}\\
\hline
\hline

$C_n\supset C_{k}\oplus C_{n-k}$ & ($n,k$ even) \quad \quad
$\,\Z_2\times\Z_2=\langle a,b\rangle$
 \hspace{2.cm} & $\,\Z_2\,$ \rule{0pt}{12pt}\\
\hline
 $\,C_{k}$ & $\,a:\displaystyle\sum_{i=1}^{(2k-n)/2} m_{2i-1}$ & \rule{0pt}{12pt}\\
& $+\displaystyle\sum_{j=1}^{{(n-k)}/2}
 (m_{2k-n+4j-3}+m_{2k-n+4j-2})\,$mod$\,2$ &  $a$ \rule{0pt}{12pt}\\
\cline{1-2}
$\,C_{n-k}\,$ & $b:\displaystyle\sum_{j=1}^{{(n-k)}/2}
 (m_{2k-n+4j-2}+m_{2k-n+4j-1})\,\,$mod$\,2$&  \rule{0pt}{12pt}\\
\hline
\hline

$C_n\supset C_{k}\oplus C_{n-k}$ & ($n$ even, $k$ odd) \ 
$\,\Z_2\times\Z_2=\langle a,b\rangle$
 \hspace{2.cm} & $\,\Z_2\,$ \rule{0pt}{12pt}\\
\hline
$\,C_{k}\,$ & $\,a:\displaystyle\sum_{i=1}^{(2k-n)/2} m_{2i-1}$ & \rule{0pt}{12pt}\\
& $+\displaystyle\sum_{j=1}^{(n-k+1)/2}
 (m_{2k-n+4j-3}+m_{2k-n+4j-2})\,$mod$\,2$ &  $a$ \rule{0pt}{12pt}\\
\cline{1-2}
$\,C_{n-k}\,$ & $b:\displaystyle\sum_{j=1}^{{(n-k-1)}/2}
 (m_{2k-n+4j-2}+m_{2k-n+4j-1})$&  \rule{0pt}{12pt}\\
& $+m_n\,$mod$\,2$&  \rule{0pt}{12pt}\\
\hline
\hline

$C_n\supset C_{k}\oplus C_{n-k}$ & ($n$ odd, $k$ even) \
$\,\Z_2\times\Z_2=\langle a,b\rangle$
 \hspace{2.cm} & $\,\Z_2\,$ \rule{0pt}{12pt}\\
\hline
$\,C_{k}\,$ & $\,a:\displaystyle\sum_{i=1}^{(2k-n+1)/2} m_{2i-1}$ & \rule{0pt}{12pt}\\
& $+\displaystyle\sum_{j=1}^{(n-k-1)/2}
 (m_{2k-n+4j-1}+m_{2k-n+4j})\,$mod$\,2$ &  $a$ \rule{0pt}{12pt}\\
\cline{1-2}
$\,C_{n-k}\,$ & $b:\displaystyle\sum_{j=1}^{{(n-k-1)}/2}
 (m_{2k-n+4j-2}+m_{2k-n+4j-1})$&  \rule{0pt}{12pt}\\
& $+m_n\,$mod$\,2$&  \rule{0pt}{12pt}\\

\hline

\end{tabular}}

\bigskip
\sc{Table~\ref{tab:discrete}.} \normalfont (continued)
\end{center}

\clearpage

\begin{center}
{\footnotesize
\begin{tabular}{|c|c|c|}

\hline
$\,\mathbf{L\supset L'}\,$ & $\mathbf{C_{G}(G')}$ & $\mathbf{\frac{C_G(G')}{Z(G)}}$\rule{0pt}{12pt}\\
\hline
\hline
$D_4\supset A_1{\oplus}A_1{\oplus}A_1{\oplus}A_1$ & $\,\Z_2\times\Z_2\times\Z_2=\langle a,b,c\rangle$ \
& $\Z_2$ \rule{0pt}{12pt}\\
\hline
$A_1$ & $a:m_2+m_3\,$mod$\,2\,$ &  \rule{0pt}{12pt}\\
\cline{1-2}
$A_1$ & $b:m_2+m_4\,$mod$\,2$& $b$ \rule{0pt}{12pt}\\
\cline{1-2}
$\,A_1\,$ & $c:m_1+m_2+m_3+m_4\,$mod$\,2$&   \rule{0pt}{12pt}\\
\cline{1-2}
$\,A_1\,$ & $a{+}b{+}c:m_1+m_2\,$mod$\,2$&  \rule{0pt}{12pt}\\

\hline
\hline
$D_n\supset D_{n-2}\oplus A_1\oplus A_1$ & ($n$ even) \quad \
$\,\Z_2\times\Z_2\times\Z_2=\langle a,b,c\rangle$ \hspace{1.5cm}
& $\Z_2$ \rule{0pt}{12pt}\\
\hline
$\,D_{n-2}\,$ & $a:m_1+m_3+\dots+m_{n-5}+m_{n-2}+\tfrac{n}{2}m_{n-1}$ &  \rule{0pt}{12pt}\\
&$+(1+\tfrac{n}{2})m_n\,$mod$\,2\,$ & $a$ \rule{0pt}{12pt}\\
& and  \quad $b:m_{n-1}+m_n\,$mod$\,2$& \rule{0pt}{12pt}\\
\cline{1-2}
$\,A_1\,$ & $c:m_{n-3}+m_{n-2}+m_{n-1}+m_n\,$mod$\,2$&   \rule{0pt}{12pt}\\
\cline{1-2}
$\,A_1\,$ & $b{+}c:m_{n-3}+m_{n-2}\,$mod$\,2$&  \rule{0pt}{12pt}\\

\hline
\hline
$D_n\supset D_{n-2}\oplus A_1\oplus A_1$ & ($n$ odd) \qquad  \quad
$\,\Z_4\times\Z_2=\langle a,b\rangle$ \hspace{2.cm}
& $\,\Z_2\,$ \rule{0pt}{12pt}\\
\hline
$\,D_{n-2}\,$ & $a:2m_1+2m_3+\dots+2m_{n-4}+2m_{n-3}$ &  \rule{0pt}{12pt}\\
&$+nm_{n-1}+(n-2)m_n\,$mod$\,4\,$& $a$
\rule{0pt}{12pt}\\
\cline{1-2}
$\,A_1\,$ & $b:m_{n-3}+m_{n-2}+m_{n-1}+m_n\,$mod$\,2$&   \rule{0pt}{12pt}\\
\cline{1-2}
$\,A_1\,$ & $2a{+}b:m_{n-3}+m_{n-2}\,$mod$\,2$&  \rule{0pt}{12pt}\\

\hline
\hline

$D_6\supset A_3\oplus A_3$ & $\,\Z_4\times\Z_2=\langle a,a{+}b\rangle$
& $\Z_2$ \rule{0pt}{12pt}\\
\hline
$A_3$ & $a:2m_2+2m_3+m_5+3m_6\,$mod$\,4$& $a$
\rule{0pt}{9pt}\\
\cline{1-2}
$A_3$ & $b:2m_1+2m_2+3m_5+3m_6\,$mod$\,4$&  \rule{0pt}{12pt}\\

\hline
\hline

$D_n\supset D_{n-3}\oplus A_3$ & ($n$ even) \quad  \quad
$\,\Z_4\times\Z_2=\langle a,a{+}b\rangle$ \hspace{1.8cm}
& $\Z_2$ \rule{0pt}{12pt}\\
\hline
$\,D_{n-3}\,$ & $a:2m_1+2m_3+\dots + 2m_{n-7}+2m_{n-4}+$ & \rule{0pt}{9pt}\\
& $2m_{n-3}+(n-1)m_{n-1}+(n+1)m_n\,$mod$\,4$& $a$
\rule{0pt}{9pt}\\
\cline{1-2}
$\,A_3\,$ & $b:2m_{n-5}+2m_{n-4}+3m_{n-1}+3m_n\,$mod$\,4$&  \rule{0pt}{12pt}\\

\hline
\hline

$D_n\supset D_{n-3}\oplus A_3$ & ($n$ odd) \quad \ \ \quad
$\,\Z_2\times\Z_4=\langle a,b\rangle$ \hspace{1.9cm}
& $\Z_2$ \rule{0pt}{12pt}\\
\hline
$\,D_{n-3}\,$ & $a:m_1+m_3+\dots+m_{n-6}+m_{n-5}+m_{n-2}$ & \rule{0pt}{9pt}\\
& $+\tfrac{n-1}{2} m_{n-1}+\tfrac{n+1}{2}m_n\,$mod$\,2\,$ & $a$\rule{0pt}{9pt}\\
& and  \quad $2b:m_{n-1}+m_n\,$mod$\,2$&
 \rule{0pt}{9pt}\\
\cline{1-2}
$\,A_3\,$ & $b:2m_{n-5}+2m_{n-4}+3m_{n-1}+3m_n\,$mod$\,4$&  \rule{0pt}{12pt}\\

\hline

\end{tabular}}

\bigskip
\sc{Table~\ref{tab:discrete}.} \normalfont (continued)
\end{center}

\clearpage

\begin{center}
{\footnotesize
\begin{tabular}{|c|c|c|}

\hline
$\,\mathbf{L\supset L'}\,$ & $\mathbf{C_{G}(G')}$ & $\mathbf{\frac{C_G(G')}{Z(G)}}$\rule{0pt}{12pt}\\
\hline
\hline
$D_n\supset D_{k}\oplus D_{n-k}$ & ($n,k$ odd) \quad  \qquad
$\,\Z_4  \times\Z_2=\langle a,b\rangle$ \hspace{2.1cm}
 \qquad  & $\,\Z_2\,$ \rule{0pt}{12pt}\\
\hline
$\,D_{k}\,$ & $\,a:\displaystyle\sum_{i=1}^{(2k-n+1)/2} 2m_{2i-1}$ &\rule{0pt}{12pt}\\
& $+\displaystyle\sum_{j=1}^{(n-k-2)/2} 2(m_{2k-n+4j-1}+m_{2k-n+4j})$ &\rule{0pt}{12pt}\\
& $+k(m_{n-1}+m_n)\,$mod$\,4$ & $a$  \rule{0pt}{12pt}\\
\cline{1-2}
$\,D_{n-k}\,$ & $2a:m_{n-1}+m_n\,$mod$\,2$ \quad and &\rule{0pt}{12pt}\\
& $b:\displaystyle\sum_{j=1}^{(n-k-2)/2}
 (m_{2k-n+4j-2}+m_{2k-n+4j-1})$&  \rule{0pt}{12pt}\\
& $+m_{n-2}+\tfrac{n-k-2}{2}m_{n-1}+\tfrac{n-k}{2}m_n\,$mod$\,2\,\,$ & \rule{0pt}{12pt}\\
\hline
\hline

$D_n\supset D_{k}\oplus D_{n-k}$ & ($n,k$ even) \qquad
$\,\Z_2\times\Z_2 \times\Z_2=\langle a,b,c\rangle$
 \hspace{1.7cm} & $\Z_2$ \rule{0pt}{12pt}\\
\hline
$\,D_{k}\,$ & $\,a:m_{n-1}+m_{n}\,$mod$\,2$  \quad and \quad
$\,b:\displaystyle\sum_{i=1}^{(2k-n)/2} m_{2i-1}$ &  \rule{0pt}{12pt}\\
& $+\displaystyle\sum_{j=1}^{(n-k)/2}
 (m_{2k-n+4j-3}+m_{2k-n+4j-2})$ &\rule{0pt}{12pt}\\
& $+\tfrac{k}{2}(m_{n-1}+m_n)\,$mod$\,2$ &  $b$\rule{0pt}{12pt}\\
\cline{1-2}
$\,D_{n-k}\,$ & $a:m_{n-1}+m_n\,$mod$\,2$ \quad and &\rule{0pt}{12pt}\\
& $c:\displaystyle\sum_{j=1}^{(n-k-2)/2}
 (m_{2k-n+4j-2}+m_{2k-n+4j-1})$&  \rule{0pt}{12pt}\\
& $+m_{n-2}+\tfrac{n-k-2}{2}m_{n-1}+\tfrac{n-k}{2}m_n\,$mod$\,2\,\,$ & \rule{0pt}{12pt}\\
\hline
\hline

$D_n\supset D_{k}\oplus D_{n-k}$ & ($n$ even, $k$ odd) \
$\,\Z_4  \times\Z_2=\langle a,a{+}b\rangle$
 \hspace{2.2cm} & $\Z_2$ \rule{0pt}{12pt}\\
\hline
$\,D_{k}\,$ & $a:\displaystyle\sum_{i=1}^{(2k-n)/2} 2m_{2i-1} $ &\rule{0pt}{12pt}\\ 
& $+\displaystyle\sum_{j=1}^{(n-k-1)/2}
 2(m_{2k-n+4j-3}+m_{2k-n+4j-2})$ &$a$\rule{0pt}{12pt}\\
& $+k(m_{n-1}+m_n)\,$mod$\,4$  &   \rule{0pt}{12pt}\\
\cline{1-2}
$\,D_{n-k}\,$ & $b:\displaystyle\sum_{j=1}^{(n-k-1)/2}
 2(m_{2k-n+4j-2}+m_{2k-n+4j-1})$& \rule{0pt}{12pt}\\
& $+(n{-}k{-}2)m_{n-1}+(n{-}k)m_{n}\,$mod$\,4$ &  \rule{0pt}{12pt}\\

\hline
\end{tabular}}

\bigskip
\sc{Table~\ref{tab:discrete}.} \normalfont (continued)
\end{center}

\clearpage

\begin{center}
{\footnotesize
\begin{tabular}{|c|c|c|}

\hline
$\,\mathbf{L\supset L'}\,$ & $\mathbf{C_{G}(G')}$ & $\mathbf{\frac{C_G(G')}{Z(G)}}$\rule{0pt}{12pt}\\
\hline
\hline

$D_n\supset D_{k}\oplus D_{n-k}$ & ($n$ odd, $k$ even) \ \
$\,\Z_2  \times\Z_4=\langle a,b\rangle$
 \hspace{2.2cm} & $\Z_2$ \rule{0pt}{12pt}\\
\hline
$\,D_{k}\,$ &  $a:\displaystyle\sum_{i=1}^{(2k-n+1)/2} m_{2i-1}$ &\rule{0pt}{12pt}\\
& $+\displaystyle\sum_{j=1}^{(n-k-1)/2}
 (m_{2k-n+4j-1}+m_{2k-n+4j})$ &$a$\rule{0pt}{12pt}\\
& $+\tfrac{k}{2}(m_{n-1}+m_n)\,$mod$\,2$ \quad and\quad $2b:m_{n-1}+m_{n}\,$mod$\,2$ &   \rule{0pt}{12pt}\\
\cline{1-2}
$\,D_{n-k}\,$ & $b:\displaystyle\sum_{j=1}^{(n-k-1)/2}
 2(m_{2k-n+4j-2}+m_{2k-n+4j-1})$& \rule{0pt}{12pt}\\
& $+(n{-}k{-}2)m_{n-1}+(n{-}k)m_{n}\,$mod$\,4$ &  \rule{0pt}{12pt}\\

\hline
\end{tabular}}

\bigskip
\sc{Table~\ref{tab:discrete}.} \normalfont (continued)
\end{center}

\clearpage

\begin{table}
\begin{center}
{\footnotesize
\begin{tabular}{|c|c|c|}

\hline
$\,\mathbf{L\supset L'}\,$ & $\mathbf{C_{G}(G')}$ & $\mathbf{\frac{C_G(G')}{Z(G)}}$\rule{0pt}{12pt}\\

\hline
\hline
$\,E_6\supset A_5\oplus A_1\,$&  $\,\Z_6=\langle a \rangle$
& $\,\Z_2\,$ \rule{0pt}{12pt}\\
\hline
$\,A_5\,$ &  $a:4m_1+5m_2+3m_3+m_4+2m_5+3m_6\,$mod$\,6$  & $3a$\rule{0pt}{12pt}\\
\cline{1-2}
$\,A_1\,$ & $\,3a:m_2+m_3+m_4+m_6\,$mod$\,2$ &  \rule{0pt}{12pt}\\

\hline
\hline

$E_6\supset A_2\oplus A_2\oplus A_2$&
$\,\Z_3\times\Z_3=\langle a,b\rangle\,$ &
$\,\Z_3\,$ \rule{0pt}{12pt}\\
\hline
$A_2$& $a:m_1+m_5+m_6\,$mod$\,3$ & \rule{0pt}{12pt}\\
\cline{1-2}
$A_2$& $b:2m_2+m_4+m_5+2m_6\,$mod$\,3$ & $b$ \rule{0pt}{12pt}\\
\cline{1-2}
$A_2$& $2a{+}b:2m_1+2m_2+m_4+m_6\,$mod$\,3$ & \rule{0pt}{12pt}\\

\hline
\hline

$\,E_7\supset A_5\oplus A_2\,$&  $\,\Z_6=\langle a\rangle\,$ &
$\,\Z_3\,$ \rule{0pt}{12pt}\\
\hline
$\,A_5\,$ & $\,a:2m_2+m_4+4m_5+3m_6+5m_7\,$mod$\,6$ & $a$\rule{0pt}{12pt}\\
\cline{1-2}
$\,A_2\,$ & $\,2a:2m_2+m_4+m_5+2m_7\,{\rm mod}\,3\,$&   \rule{0pt}{12pt}\\

\hline
\hline 

$\,E_7\supset A_7\,$ &
$\,\Z_4=\langle a\rangle$& $\,\Z_2\,$ \rule{0pt}{12pt}\\
\hline
$\,A_7\,$& $a:2m_1+2m_2+m_4+m_6+3m_7\,{\rm mod}\,4\,$ & $a$ \rule{0pt}{12pt}\\

\hline
\hline

$\,E_7\supset D_6\oplus A_1\,$ &  $\,\Z_2\times\Z_2=\langle a,b\rangle\,$ &
$\,\Z_2\,$\rule{0pt}{12pt}\\
\hline
$\,D_6\,$ & $\,a:m_2+m_3+m_6\,$mod$\,2$ \quad and &  \rule{0pt}{12pt}\\
& $b:m_4+m_6+m_7\,$mod$\,2$ & $b$\rule{0pt}{12pt}\\
\cline{1-2}
$\,A_1\,$& $\,a{+}b:m_2+m_3+m_4+m_7\,$mod$\,2$ & \rule{0pt}{12pt}\\

\hline
\hline

$\,E_8\supset D_8\,$ &
$\,\Z_2=\langle a\rangle$ & $\Z_2$ \rule{0pt}{12pt}\\
\hline
$\,D_8\,$ & $\,a:m_1+m_2+m_5+m_6\,{\rm mod}\,2\,$ (and $\,0\,$mod$\,2$)&$a$
 \rule{0pt}{12pt}\\

\hline
\hline

$\,E_8\supset A_8\,$ &
$\,\Z_3=\langle a\rangle$ & $\Z_3$ \rule{0pt}{12pt}\\
\hline
$A_8$&$\,a:m_1+m_4+2m_6+2m_7+m_8\,{\rm mod}\,3\,$ & $a$ \rule{0pt}{12pt}\\

\hline
\hline

$\,E_8\supset E_7\oplus A_1\,$ &
$\,\Z_2=\langle a\rangle$ & $\Z_2$ \rule{0pt}{12pt}\\
\hline
$E_7$&$\,a:m_5+m_8\,{\rm mod}\,2\,$ & $a$  \rule{0pt}{12pt}\\
\cline{1-2}
$A_1$&$\,a:m_5+m_8\,{\rm mod}\,2\,$ &   \rule{0pt}{12pt}\\

\hline
\hline 

$\,E_8\supset A_4\oplus A_4\,$ &
$\,\Z_5=\langle a\rangle$ & $\Z_5$ \rule{0pt}{12pt}\\
\hline
$\,A_4\,$&
$\,a:m_1+3m_2+m_3+4m_6+4m_7+2m_8\,{\rm mod}\,5\,$ & $a$ \rule{0pt}{12pt}\\
\cline{1-2}
$\,A_4\,$&
$3a:3m_1+4m_2+3m_3+2m_6+2m_7+m_8\,{\rm mod}\,5$ & \rule{0pt}{12pt}\\

\hline
\hline

$\,E_8\supset E_6\oplus A_2\,$ &
$\,\Z_3=\langle a\rangle$ & $\Z_3$ \rule{0pt}{12pt}\\
\hline
$\,E_6\,$& $\,a:m_3+2m_7+m_8\,{\rm mod}\,3\,$ &$a$\rule{0pt}{12pt}\\
\cline{1-2}
$\,A_2\,$& $\,a:m_3+2m_7+m_8\,$mod$\,3$& \rule{0pt}{12pt}\\

\hline
\hline

$\,G_2\supset A_1\oplus A_1\,$ &
$\,\Z_2=\langle a\rangle$ & $\Z_2$ \rule{0pt}{12pt}\\
\hline
$\,A_1\,$& $\,a:m_1+m_2\,{\rm
mod}\,2\,$& $a$
\rule{0pt}{12pt} \\
\cline{1-2}
$\,A_1\,$& $\,a:m_1+m_2\,{\rm
mod}\,2\,$ &
\rule{0pt}{12pt}\\
\hline
\end{tabular}}
\end{center}

\bigskip
\caption{Discrete centralizers and relative congruence classes of irreducible representations of the exceptional simple Lie algebras.}
\label{tab:discrete.exc}
\end{table}

\clearpage

\begin{center}
{\footnotesize
\begin{tabular}{|c|c|c|}
\hline
$\,\mathbf{L\supset L'}\,$ & $\mathbf{C_{G}(G')}$ & $\mathbf{\frac{C_G(G')}{Z(G)}}$\rule{0pt}{12pt}\\
\hline
\hline 

$\,G_2\supset A_2\,$& \ \ \quad \quad \qquad \qquad \qquad
$\,\Z_3=\langle a\rangle$ \ \ \quad \quad \qquad \qquad \qquad \qquad & $\Z_3$ \rule{0pt}{12pt}\\
\hline
$\,A_2\,$& $\,a:m_2\,{\rm mod}\,3\,$  & $a$ \rule{0pt}{12pt}\\

\hline
\hline 

\hspace{.25cm} $\,F_4\supset A_2\oplus A_2\,$ \hspace{.25cm} &
$\,\Z_3=\langle a\rangle$ & $\Z_3$ \rule{0pt}{12pt}\\
\hline
$\,A_2\,$ &
$\,a:m_2+m_4\,{\rm mod}\,3\,$ & $a$ \rule{0pt}{12pt}\\
\cline{1-2}
$\,A_2\,$ &
$\,a:m_2+m_4\,{\rm mod}\,3\,$ & \rule{0pt}{12pt}\\

\hline
\hline

$\,F_4\supset B_4\,$ &
$\,\Z_2=\langle a\rangle$ & $\Z_2$ \rule{0pt}{12pt}\\
\hline
$\,B_4\,$& $\,a:m_3+m_4\,{\rm mod}\,2\,$ & $a$  \rule{0pt}{12pt}\\

\hline
\hline 

$\,F_4\supset C_3\oplus A_1\,$ &
$\,\Z_2=\langle a\rangle$ & $\Z_2$ \rule{0pt}{12pt}\\
\hline
$\,C_3\,$ & $\,a:m_2+m_4\,{\rm mod}\,2\,$ & $a$  \rule{0pt}{12pt}\\
\cline{1-2}
$\,A_1\,$ & $\,a:m_2+m_4\,{\rm mod}\,2\,$ & \rule{0pt}{12pt}\\

\hline
\end{tabular}}

\bigskip
\sc{Table~\ref{tab:discrete.exc}.} \normalfont (continued)
\end{center}

\clearpage

\begin{table}
\begin{center}
{\footnotesize
\begin{tabular}{|c|c|}

\hline

$\mathbf{A_n\supset A_k\oplus A_{n-k-1} \oplus H_1}$&  $C_G(G')\cong U_1\times Z(A_n)$
 \rule{0pt}{12pt}\\
& \hspace{2.5cm} $\,C_G(G')/U_1\cong \Z_d\,,\ d=gcd(k{+}1,n{+}1)$
 \rule{0pt}{12pt}\\
\hline
\hline
$Z(A_n)$ & $m_1+2m_2+\dots+nm_n \ \ $mod$\,(n{+}1)$   \rule{0pt}{12pt}\\
\hline
$Z(A_k)$ & $m_1+2m_2+\dots+km_k \ \ $mod$\,(k{+}1)$   \rule{0pt}{12pt}\\
\hline
$Z(A_{n-k-1})$ & $m_{k+2}+2m_{k+3}+\dots+(n{-}k{-}1)m_n \ \ $mod$\,(n{-}k)$ \rule{0pt}{12pt}\\
\hline
$\,H_1\,$ & $\C \frac{1}{n+1}\left( (n{-}k)\displaystyle\sum_{i=1}^{k+1}i\hat\alpha_i + (k{+}1) \displaystyle\sum_{i=1}^{n-k-1}(n{-}k{-}i)\hat\alpha_{k+1+i} \right)$ \rule{0pt}{12pt}\\
\hline
Relative congruence relation& $(n{-}k)\displaystyle\sum_{i=1}^{k+1}im_i + (k{+}1) \displaystyle\sum_{i=1}^{n-k-1}(n{-}k{-}i)m_{k+1+i}$ \rule{0pt}{12pt}\\

\hline
\hline

$\mathbf{B_n\supset B_{n-1}\oplus H_1}$ & $C_G(G')\cong U_1$
\rule{0pt}{12pt}\\
\hline\hline
$\,Z(B_n)\,$ & $m_{n}\,$mod$\,2$    \rule{0pt}{12pt}\\
\hline
$\,Z(B_{n-1})\,$ & $m_{n}\,$mod$\,2$    \rule{0pt}{12pt}\\
\hline
$\,H_1\,$ &$\C \frac{1}{2}\left( \hat\alpha_n \right)$ 
\rule{0pt}{12pt}\\
\hline
Relative congruence relation& $m_n$ \rule{0pt}{12pt}\\

\hline
\hline

$\mathbf{C_n\supset A_{n-1}\oplus H_1}$ & $C_G(G')\cong U_1$ \rule{0pt}{12pt}\\
\hline
\hline
$\,Z(C_{n})\,$ & $m_1+m_3+\dots+m_{2[\frac{n+1}{2}]-1}\,$mod$\,2$    \rule{0pt}{12pt}\\
\hline
$\,Z(A_{n-1})\,$ & $m_1+m_3+\dots+m_{2[\frac{n+1}{2}]-1}\,$mod$\,n$    \rule{0pt}{12pt}\\
\hline
$\,H_1\,$ & $\C \frac{1}{2}\left( \hat\alpha_1+\hat\alpha_3+\dots+\hat\alpha_{2[\frac{n+1}{2}]-1}\right)$ \rule{0pt}{12pt}\\
\hline
Relative congruence relation& $m_1+m_3+\dots+m_{2[\frac{n+1}{2}]-1}$ \rule{0pt}{12pt}\\
\hline
\hline

$\mathbf{D_n\supset A_{n-1}\oplus H_1}$ & ($n$ even) \ \qquad \qquad   $C_G(G')\cong U_1\times\Z_2$ \hspace{2.4cm}
\rule{0pt}{12pt}\\
\hline
\hline
$\,Z(D_{n})\,$ & $m_{n-1}+m_n\,$mod$\,2$ \quad and    \rule{0pt}{12pt}\\
&$m_1+m_3+\dots+m_{n-3}+(1{+}\tfrac{n}{2}) m_{n-1}+\tfrac{n}{2}m_n\,$mod$\,2$    \rule{0pt}{12pt}\\
\hline
$\,Z(A_{n-1})\,$ & $m_1+m_3+\dots+m_{n-3}+(1{+}\tfrac{n}{2}) m_{n-1}+\tfrac{n}{2}m_n\,$mod$\,n$    \rule{0pt}{12pt}\\
\hline
$\,H_1\,$ &  $\C \frac{1}{2}\left( \hat\alpha_1+\hat\alpha_3+\dots+\hat\alpha_{n-1}\right)$ \rule{0pt}{12pt}\\
\hline
Relative congruence relation& $m_1+m_3+\dots+m_{n-1}$ \rule{0pt}{12pt}\\
\hline
\hline

$\mathbf{D_n\supset A_{n-1}\oplus H_1}$ & ($n$ odd)\qquad \qquad \qquad$C_G(G')\cong U_1$ \hspace{2.8cm}
 \rule{0pt}{12pt}\\
\hline
\hline
$\,Z(D_{n})\,$ & $2m_1+2m_3+\dots+2m_{n-2}+(n{-}2)m_{n-1}+nm_n\,$mod$\,4$    \rule{0pt}{12pt}\\
\hline
$\,Z(A_{n-1})\,$ & $m_1+m_3+\dots+m_{n-2}+\tfrac{n-1}{2} m_{n-1}+\tfrac{n+1}{2}m_n\,$mod$\,n$     \rule{0pt}{12pt}\\
\hline
$\,H_1\,$ &$\C \frac{1}{4}\left(2\hat\alpha_1+2\hat\alpha_3+\dots+2\hat\alpha_{n-2}-\hat\alpha_{n-1}+\hat\alpha_n\right)$  \rule{0pt}{12pt}\\
\hline
Relative congruence relation& $2m_1+2m_3+\dots+2m_{n-2}-m_{n-1}+m_n$ \rule{0pt}{12pt}\\
\hline
\end{tabular}}
\end{center}

\bigskip
\caption{Continuous centralizers and relative congruence classes of irreducible representations of the simple Lie algebras.}
\label{tab:continuous}
\end{table}

\clearpage

\begin{center}
{\footnotesize
\begin{tabular}{|c|c|}

\hline
$\mathbf{D_n\supset D_{n-1}\oplus H_1}$ & ($n$ even) \ \qquad \qquad   $C_G(G')\cong U_1\times\Z_2$ \hspace{2.4cm}
 \rule{0pt}{12pt}\\
\hline
\hline
$Z(D_n)$ & $m_{n-1}+m_n\,$mod$\,2$ \quad and    \rule{0pt}{12pt}\\
&$m_1+m_3+\dots+m_{n-3}+(1{+}\tfrac{n}{2}) m_{n-1}+\tfrac{n}{2}m_n\,$mod$\,2$    \rule{0pt}{12pt}\\
\hline
$Z(D_{n-1})$ & $2m_1+2m_3+\dots+2m_{n-3}+(n{-}1)(m_{n-1}+m_n)\,$mod$\,4$    \rule{0pt}{12pt}\\
\hline
$\,H_1\,$ & $\C \frac{1}{2}\left(\hat\alpha_{n-1}-\hat\alpha_n\right)$
\rule{0pt}{12pt}\\
\hline
Relative congruence relation& $m_{n-1}-m_n$ \rule{0pt}{12pt}\\
\hline
\hline
$\mathbf{D_n\supset D_{n-1}\oplus H_1}$ & ($n$ odd)  \qquad \qquad  $C_G(G')\cong U_1\times Z(D_n)$ \hspace{2.1cm}
 \rule{0pt}{12pt}\\
& $C_G(G')/U_1\cong \Z_2$ \rule{0pt}{12pt}\\
\hline
\hline
$Z(D_n)$ & $2m_1+2m_3+\dots+2m_{n-2}+(n{-}2)m_{n-1}+nm_n\,$mod$\,4$    \rule{0pt}{12pt}\\
\hline
$Z(D_{n-1})$ & $m_{n-1}+m_n\,$mod$\,2$ \quad and     \rule{0pt}{12pt}\\
&$m_1+m_3+\dots+m_{n-2}+\tfrac{n-1}{2}(m_{n-1}+m_n)\,$mod$\,2$    \rule{0pt}{12pt}\\
\hline
$\,H_1\,$ & $\C \frac{1}{2}\left(\hat\alpha_{n-1}-\hat\alpha_n\right)$
\rule{0pt}{12pt}\\
\hline
Relative congruence relation& $m_{n-1}-m_n$ \rule{0pt}{12pt}\\
\hline
\hline
$\mathbf{E_6\supset D_5\oplus H_1}$ & $C_G(G')\cong U_1$
 \rule{0pt}{12pt}\\
\hline
\hline
$Z(E_6)$ & $m_1-m_2+m_4-m_5\,$mod$\,3$    \rule{0pt}{12pt}\\
\hline
$Z(D_5)$ & $m_1-m_2+m_4-m_5\,$mod$\,4$    \rule{0pt}{12pt}\\
\hline
$\,H_1\,$ & $\C \frac{1}{3}\left(\hat\alpha_1-\hat\alpha_2+\hat\alpha_4-\hat\alpha_5\right)$   \rule{0pt}{12pt}\\
\hline
Relative congruence relation& $m_1-m_2+m_4-m_5$ \rule{0pt}{12pt}\\
\hline
\hline
\hspace{.75cm} $\mathbf{E_7\supset E_6\oplus H_1}$ \hspace{.75cm} & $C_G(G')\cong U_1$
\rule{0pt}{12pt}\\
\hline
\hline
$Z(E_7)$ & $m_4+m_6+m_7\,$mod$\,2$    \rule{0pt}{12pt}\\
\hline
$Z(E_6)$ & \hspace{2.3cm} $m_4+m_6+m_7\,$mod$\,3$ \hspace{2.3cm}    \rule{0pt}{12pt}\\
\hline
$\,H_1\,$ & $\C \frac{1}{2}\left(\hat\alpha_4+\hat\alpha_6+\hat\alpha_7\right)$ 
\rule{0pt}{12pt}\\

\hline
Relative congruence relation& $m_4+m_6+m_7$ \rule{0pt}{12pt}\\
\hline
\end{tabular}}

\bigskip
\sc{Table~\ref{tab:continuous}.} \normalfont (continued)
\end{center}

\end{document}